\gdef\@copyrightpermission{
  \begin{minipage}{0.2\columnwidth}
   \href{https://creativecommons.org/licenses/by/4.0/}
   {\includegraphics[width=0.90\textwidth]{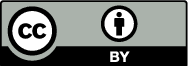}}
  \end{minipage}\hfill
  \begin{minipage}{0.8\columnwidth}
   \href{This work is licensed under a Creative Commons Attribution International 4.0 License.}
  \end{minipage}
  \vspace{5pt}
}
\gdef\@copyrightpermission{
  \begin{minipage}{0.2\columnwidth}
   \href{https://creativecommons.org/licenses/by/4.0/}{\includegraphics[width=0.90\textwidth]{figs/4acm-cc-by-88x31.eps}}
  \end{minipage}\hfill
  \begin{minipage}{0.8\columnwidth}
   \href{https://creativecommons.org/licenses/by/4.0}{This work is licensed under a Creative Commons Attribution International 4.0 License.}
  \end{minipage}
  \vspace{5pt}
}
\newtheorem{theorem}{Theorem}
\newtheorem{lemma}[theorem]{Lemma}
\newtheorem{claim}[theorem]{Claim}
\newcommand{\figlabel}[1]{\label{fig:#1}}
\newcommand{\figref}[1]{Fig.~\ref{fig:#1}}
\newcommand{\seclabel}[1]{\label{sec:#1}}
\newcommand{\secref}[1]{Section~\ref{sec:#1}}
\newcommand{\thmlabel}[1]{\label{thm:#1}}
\newcommand{\thmref}[1]{Theorem~\ref{thm:#1}}
\newcommand{\lemlabel}[1]{\label{lem:#1}}
\newcommand{\lemref}[1]{Lemma~\ref{lem:#1}}
\newcommand{\itmlabel}[1]{\label{itm:#1}}
\newcommand{\itmref}[1]{\ref{itm:#1}}
\newcommand{\equlabel}[1]{\label{eq:#1}}
\newcommand{\equref}[1]{Equation~(\ref{eq:#1})}
\newcommand{\applabel}[1]{\label{app:#1}}
\newcommand{\appref}[1]{Appendix~\ref{app:#1}}
\newcommand{\algolabel}[1]{\label{algo:#1}}
\newcommand{\algoref}[1]{Algorithm~\ref{algo:#1}}
\renewcommand{\linelabel}[1]{\label{line:#1}}
\renewcommand{\lineref}[1]{Line~\ref{line:#1}}
\newcommand{\linerangeref}[2]{Lines~\ref{line:#1}-\ref{line:#2}}
\newcommand{\claimlabel}[1]{\label{claim:#1}}
\renewcommand{\emptyset}{\varnothing}
\newcommand{\nats}{\mathbb{N}}
\newcommand{\set}[1]{\{#1\}}
\newcommand{\setpred}[2]{\{#1 \,|\, #2\}}
\newcommand{\prob}[1]{{\mathsf{Pr}}\!\left[#1\right]}
\newcommand{\probentangle}[3]{{\mathsf{Pr}}_{#2}\!\left[#1; {#3}\right]}
\newcommand{\probunif}[2]{{\mathsf{Pr}}\!\left[#1; {#2}\right]}
\newcommand{\TVDist}[1]{\mathsf{TV}(#1)}
\newcommand{\TVDistUnif}[2]{{\mathsf{TV}}(#1\;;#2)}
\newcommand{\norm}[1]{\left| #1 \right|}
\newcommand{\emptystr}{\lambda}
\newcommand{\concat}{\cdot}
\newcommand{\alphabet}{\Sigma}
\newcommand{\bitclr}[1]{\textcolor{red!50!black}{\mathtt{#1}}}
\newcommand{\bb}{\bitclr{b}}
\newcommand{\zz}{\bitclr{0}}
\newcommand{\oo}{\bitclr{1}}
\newcommand{\aut}{\mathcal{A}}
\newcommand{\unroll}{\textsf{unroll}}
\newcommand{\autunroll}{\aut_\unroll}
\newcommand{\states}{Q}
\newcommand{\stateq}{q}
\newcommand{\statesP}{P}
\newcommand{\statep}{p}
\newcommand{\init}{I}
\newcommand{\trans}{\Delta}
\newcommand{\accept}{F}
\newcommand{\lang}[1]{\mathcal{L}(#1)}
\newcommand{\epsl}{\epsilon}
\newcommand{\dl}{\delta}
\newcommand{\thresh}{\ensuremath{\texttt{thresh}}}
\newcommand{\pred}[2]{\textsf{Pred}(#1, #2)}
\newcommand{\lvl}{\ell}
\newcommand{\stlvl}[2]{{#1}^{#2}}
\newcommand{\sample}[1]{\mathsf{S}(#1)}
\newcommand{\usample}[1]{\mathsf{U}(#1)}
\newcommand{\approxcnt}[1]{\mathsf{N}(#1)}
\newcommand{\NumSamples}{\mathtt{ns}}
\newcommand{\NumSamplesExtra}{\mathtt{xns}}
\newcommand{\appdel}{\texttt{AppUnion}\xspace}
\newcommand{\sz}{\textsf{sz}}
\newcommand{\Poisson}{\mathsf{Poisson}}
\newcommand{\append}[1]{\boldsymbol{\!\cdot}\!\mathsf{append}(#1)}
\newcommand{\contains}[1]{\!\boldsymbol{\cdot}\!\mathsf{contains}(#1)}
\newcommand{\dequeue}{\!\boldsymbol{\cdot}\!\mathsf{deque}()}
\newcommand\SmpBern[4]{%
    \SmpBern@prob{#1}%
    \SmpBern@if{#2}%
    \SmpBern@else{#3}{#4}%
}
\newcommand\lSmpBern[4]{%
    \lSmpBern@prob{#1}%
    \lSmpBern@if{#2}%
    \lSmpBern@else{#3}{#4}%
}
\newcommand{\ucomment}[1]{\textcolor{red}{#1}}
\newcommand{\failevent}{\mathtt{Fail}}
\newcommand{\smallset}{\textsc{SmallS}}
\newcommand{\accurateEstevent}{\texttt{AccurateN}}
\newcommand{\accurateSmpevent}[1]{{\mathsf{U}}_{#1}}
\newcommand{\accurateEVevent}{\texttt{Accurate}\appdel}
\newcommand{\rtrnsmp}{\mathtt{ret}_{\SampleFun}}
\newcommand{\czero}{\frac{2}{3e}}
\begin{document}

\title{A faster FPRAS for \#NFA}
\titlenote{All authors contributed equally to this research. The symbol \textcircled{r} denotes random author order. The publicly
verifiable record of the randomization is available at \protect\url{https://www.aeaweb.org/journals/policies/random-author-order}}

\author{Kuldeep S. Meel} 
\affiliation{%
  \institution{University of Toronto}
  \city{Toronto}
  \country{Canada}}
\email{meel@cs.toronto.edu}
\orcid{0000-0001-9423-5270}

\author{Sourav Chakraborty} 
\affiliation{%
  \institution{Indian Statistical Institute}
  \city{Kolkata}
  \country{India}}
\email{sourav@isical.ac.in}
\orcid{0000-0001-9518-6204}

\author{Umang Mathur}
\affiliation{%
  \institution{National University of Singapore}
  \city{Singapore}
  \country{Singapore}}
\email{umathur@comp.nus.edu.sg}
\orcid{0000-0002-7610-0660}

\renewcommand{\shortauthors}{Kuldeep S. Meel \lowercase{\textcircled{r}} Sourav Chakraborty \lowercase{\textcircled{r}} Umang Mathur}


\begin{abstract}
	Given a non-deterministic finite automaton (NFA) $\aut$ with $m$ states, and a natural number $n \in \nats$ (presented in unary), the \#NFA problem asks to determine the size of the set $\lang{\aut_n}$ of words of length $n$ accepted by $\aut$. While the corresponding decision problem of checking the emptiness of $\lang{\aut_n}$ is solvable in polynomial time, the \#NFA problem is known to be \#P-hard. Recently, the long-standing open question --- whether there is an FPRAS (fully polynomial time randomized approximation scheme) for \#NFA --- was resolved by Arenas, Croquevielle, Jayaram, and Riveros in \cite{ACJR19}. The authors demonstrated the existence of a fully polynomial randomized approximation scheme with a time complexity of $\widetilde{O}\left(m^{17}n^{17}\cdot \frac{1}{\epsl^{14}}\cdot \log (1/\delta)\right)$, for a given tolerance $\epsl$ and confidence parameter $\delta$.

	Given the prohibitively high time complexity in terms of each of the input parameters, and considering the widespread application of approximate counting (and sampling) in various tasks in Computer Science, a natural question arises: is there a faster FPRAS for \#NFA that can pave the way for the practical implementation of approximate \#NFA tools? In this work, we answer this question in the positive.
	We demonstrate that significant improvements in time complexity are achievable, and
	propose an FPRAS for \#NFA
	that is more efficient in terms of both time and sample complexity.

	A key ingredient in the FPRAS due to Arenas, Croquevielle, Jayaram, and Riveros~\cite{ACJR19}
	is inter-reducibility of sampling and counting, which necessitates a closer look at the more informative measure  --- the number of samples maintained for each pair of state $\stateq$ and length $i \leq n$.
	In particular, the scheme of \cite{ACJR19} maintains 
	$O(\frac{m^7 n^7}{\epsl^7})$ samples per pair of state and length.
	In the FPRAS we propose, we systematically reduce the number of samples 
	required for each state to be only poly-logarithmically dependent on $m$,
	 with significantly less dependence on $n$ and $\epsl$, 
	 maintaining only $\widetilde{O}(\frac{n^4}{\epsl^2})$ samples per state. 
	 Consequently, our FPRAS runs in time 
	 $\widetilde{O}((m^2n^{10}+m^3n^6)\cdot \frac{1}{\epsl^4}\cdot \log^2(1/\delta))$.
	 The FPRAS and its analysis use several novel insights.
	 First, our FPRAS maintains a weaker invariant about the quality of the estimate of the number of samples
	 for each state $\stateq$ and length $i \leq n$.
	 Second, our FPRAS only requires that the distribution 
	 of the samples maintained is close to uniform distribution only in total variation distance (instead of maximum norm).
	We believe our insights may lead to further reductions in time complexity and thus open up a promising avenue for future work towards the practical implementation of tools for approximate \#NFA.
\end{abstract}


\begin{CCSXML}
<ccs2012>
<concept>
<concept_id>10003752.10010070</concept_id>
<concept_desc>Theory of computation~Theory and algorithms for application domains</concept_desc>
<concept_significance>500</concept_significance>
</concept>
</ccs2012>
\end{CCSXML}

\ccsdesc[500]{Theory of computation~Theory and algorithms for application domains}

\keywords{$\#$NFA, counting, estimation, FPRAS}

\received{December 2023}
\received[revised]{February 2024}
\received[accepted]{March 2024}

\maketitle


\section{Introduction}
\seclabel{intro}

In this paper, we focus on the following computational problem: 

\begin{description}
	\item[\bf\quad\quad\#NFA]: Given an NFA $\aut = (\states, \init, \trans, \accept)$ over the binary alphabet $\alphabet = \{0,1\}$ with $m$ states, and a number $n$ in unary, determine $|\lang{\aut_n}|$, where $\lang{\aut_n}$ is the set of strings of length $n$ that are accepted by $\aut$.
\end{description}

The problem of \#NFA is a fundamental problem in Computer Science, with range of applications in database and information extraction systems;
we outline a few below.

\myparagraph{Probabilistic Query Evaluation}{
	Given a query $Q$ and a database $D$, the problem of probabilistic query evaluation (PQE)
	is to determine the probability of the query holding on a
	randomly sampled database, in which each row is included independently 
	with probability given by its annotated value. 
	PQE is known to be \#P-hard, even in data complexity setting. 
	For a large class of queries,  PQE can 
	directly be reduced to the \#NFA problem.
	In particular, when the schema of the database $D$ consists only of binary relations,
	and when the query $Q$ is a self-join-free path query, then
	the PQE problem for $D$ and $Q$
	reduces to an instance of \#NFA, whose size is linear in the size of $Q$
	and linear in the size of $D$~\cite{vanBremen2023}, and 
	further, the time to compute
	the reduction is also linear in $D$ and $Q$.
	Thus an efficient algorithm for solving \#NFA  directly yields an efficient algorithm for probabilistic query evaluation. 
}

\myparagraph{Counting Answers to Reglar Path Queries}{
	Regular path queries (or \emph{property paths})~\cite{Angles2017} form a rich fragment of graph query languages such as SPARQL designed for graph database systems.
	Here, one abstracts information on edges in the graph database using an alphabet $\alphabet$
	and asks a variety of questions about the set of paths that start from a given source node $u$,
	end at a target node $v$.
	In particular, a regular path query $(u R v)$ asks to 
	enumerate, count or uniformly sample the set of paths (bounded in length
	by some fixed number $n$) from $u$ to $v$
	whose labels match the regular expression $R$.
	Therefore, the  problem of counting the number of answers to the queries reduces to the \#NFA
	problem, for the NFA obtained by the cross product of NFA represented by the database (with $u$ as the intial state and $v$ as the final state) and the NFA that $R$ gets compiled down to.
	This reduced instance is again linear in the size of the database as well as the query,
	and the reduction takes the same time~\cite{ACJR19}.
}

\myparagraph{Probabilistic Graph Homomorphism}{
	A probabilistic graph is a pair $(H, \pi)$, 
	where $H$ is a graph and $\pi$ is the associated probability labeling over edges. 
	The associated probability space is defined as the set of all subgraphs of $H$ 
	wherein every edge $e$ is sampled independently with probability $\pi(e)$.  
	Given a query Graph $G$  and a probabilistic graph $(H,\pi)$ 
	the problem of \emph{probabilistic graph homomorphism problem}
	asks to to compute the probability that a randomly sampled subgraph of $H$
	admits a homomorphism from G. 
	In case of 1-way (and 2-way) path queries, 
	the problem was again shown to reduce to \#NFA~\cite{AvBM23}. 
}

\myparagraph{Beyond databases}{
	From a complexity theoretic viewpoint, \#NFA is SpanL-complete and therefore, 
	finds applications in domains beyond databases such as program verification~\cite{SMC2010}, 
	testing of software~\cite{sutton2007fuzzing} 
	and distributed systems~\cite{Ozkan2019}, 
	probabilistic programming, estimation of information leakage in 
	software~\cite{Bang2016,Gao2019,Saha2023},
	information extraction~\cite{ACJR21},
	automated reasoning using BDDs~\cite{ACJR21},
	and even music generation~\cite{Donze2014}.
}

The direct applications of the \#NFA problem
brings up the central computational question --- how do we count the number of 
elements of $\lang{\aut_n}$? 
It turns out that \#NFA is, in fact, provably computationally intractable --- 
it is known to be a \#P-hard problem~\cite{ALVAREZ19933}. 
Given the intractability, and given the widespread applications of the \#NFA problem, 
the next natural question then 
arises --- can we approximate \#NFA using a polynomial time algorithm?
Until very recently, the only approximation algorithms known for \#NFA
had running times that were \emph{quasi-polynomial} in the size of the inputs~\cite{GoreJKSM97,KannanSM95}.
In a recent breakthrough by Arenas, Croquevielle, Jayaram and Riveros~\cite{ACJR19,ACJR21},
\#NFA was finally shown to admit \emph{fully polynomial time randomized approximation scheme} (FPRAS).

One would assume that the discovery of FPRAS for \#NFA would lead to design of scalable tools for all the different applications outlined above. Such has, however, not been the case and the primary hurdle to achieving practical adoption has been the prohibitively large time complexity of the proposed FPRAS--
it takes time 
$\widetilde{O}\left(m^{17}n^{17}\cdot \frac{1}{\epsl^{14}}\cdot \log (1/\dl)\right)$,
where  $\widetilde{O}$ hides the poly-logarithmic multiplicative factors in $m$ and $n$. 
It is worth highlighting that for most applications,  
the reduction to \#NFA instance often
only incurs a blow up that is a low-degree polynomial (linear or quadratic)
factor  larger than the size
of the original problem. 
This means that the bottleneck to solving the original counting problem
is  really the high runtime complexity involved in the (approximate) 
counting algorithm and not the reduction.

In this work, we pursue the goal of designing a faster algorithm,
and propose an FPRAS which runs in time 
$\widetilde{O}((m^2n^{10} + m^3n^6)\cdot \frac{1}{\epsl^4}\cdot\log^2(1/\dl))$.
The reliance of both FPRASes on the inter-reducibility of sampling and counting necessitates a more informative measure, such as the number of samples maintained for each state. The scheme proposed in ~\cite{ACJR21} maintains $O(\frac{m^7 n^7}{\epsl^7})$ samples per state while the scheme proposed in this work maintains only $O(\frac{n^4}{\epsl^2})$ samples, which is independent of $m$ and with a significant reduction in its dependence on $n$ and $\epsl$.

\subsection{Overview of our FPRAS}

At a high-level, our approach exploits the tight connection between 
approximate counting and the related problem of (almost) uniform generation.
In the context of an NFA $\aut$ and a length $n$, 
the almost uniform generation problem asks
for a randomized algorithm (generator) $G$,
such that,
on each invocation, the probability with which
$G$ returns any given 
string $x \in \lang{\aut_n}$ lies in the range 
$[\frac{1}{|\lang{\aut_n}|}(1 - \epsl), \frac{1}{|\lang{\aut_n}|}(1 + \epsl)]$;
here $\epsl \in (0, 1)$ is a parameter.
Such a random generator is a  polynomial time generator
if its running time is polynomial in $|\aut|$, $n$ and $1/\epsl$.
The existence of a polynomial time almost uniform generator for \emph{self-reducible}~\cite{JerrumVV86} 
problems implies the existence of an FPRAS, and vice versa.

This tight connection between counting and sampling was also the high-level 
idea behind the work of \cite{ACJR21}.
Consequently, our algorithm fits the template procedure of \figref{template2},
also outlined in~\cite{ACJR21}.
For technical convenience, the algorithm unrolls the automaton $\aut$ into an acyclic
graph\footnote{The assumption in  
the template algorithm of \figref{template2} that $\aut$ has only 1 accepting state is without loss of generality.} with $n$
levels; each level $\lvl$ containing the $\lvl^\text{th}$ copy $\stlvl{\stateq}{\lvl}$
of each state $\stateq$ of $\aut$.
A key component in the algorithm is to incrementally compute
an estimate $\approxcnt{\stlvl{\stateq}{\lvl}}$
of the size of the set $\lang{\stlvl{\stateq}{\lvl}}$ 
of words of length $\lvl$ that have a run ending in state $\stateq$.
Next, in order to inductively estimate $\approxcnt{\stlvl{\stateq}{\lvl}}$,
we in turn rely on the estimates of sets 
$\lang{\stlvl{\statep_{\zz}}{\lvl-1}}$ and $\lang{\stlvl{\statep_{\oo}}{\lvl-1}}$;
here $\statep_{\bb}$ is a $\bb$-predecessor state of $\stateq$ ($\bb \in \alphabet = \set{0,1}$), 
i.e., $(\statep_{\bb},\bb, \stateq) \in \trans$ is a transition of $\aut$.
These estimates can be put to use effectively because:
\begin{align*}
 \lang{\stlvl{\stateq}{\lvl}} = \big(\bigcup_{(\statep_{\zz},\zz,\stateq)\in \trans} \lang{\stlvl{\statep_{\zz}}{\lvl-1}}\big) \cdot \set{\zz} \uplus \big(\bigcup_{(\statep_{\oo},\oo,\stateq)\in \trans} \lang{\stlvl{\statep_{\oo}}{\lvl-1}}\big) \cdot \set{\oo}.   
\end{align*}


 \begin{algorithm}[t]
 \captionof{figure}{Algorithm Template: FPRAS for \#NFA}\figlabel{template2}
  \SetInd{0.3em}{0.5em}
  \Input{NFA $\aut$ with $m$ states $\states$ and a single final state $\stateq_\accept \in \states$, $n \in \nats$ (in unary)}
   Unroll the automaton $\aut$ into an acyclic graph 
   $\autunroll$ by making $n+1$ copies $\set{\stateq^\lvl}_{\lvl \in [0, n], \stateq \in \states}$ of the states $\states$ and adding transitions between immediate layers \;
   \ForEach{$\lvl \in \set{0, \ldots, n}, \stateq \in \states$}{
      Compute the estimate $\approxcnt{\stlvl{\stateq}{\lvl}}$ 
      using prior estimates  and samples
      $\set{\approxcnt{\stlvl{\statep}{i}}, \sample{\stlvl{\statep}{i}}}_{i < \lvl, \statep \in \states}$ \;
      Compute a uniform set of samples $\sample{\stlvl{\stateq}{\lvl}}$ using
      $\approxcnt{\stlvl{\stateq}{\lvl}}$ and the prior estimates and samples
      $\set{\approxcnt{\stlvl{\statep}{i}}, \sample{\stlvl{\statep}{i}}}_{i < \lvl, \statep \in \states}$ \;
}
\Return $\approxcnt{\stlvl{\stateq_\accept}{n}}$
\end{algorithm}

We remark that the sets $\set{\lang{\stlvl{\statep}{\lvl-1}}}_{\statep}$ may not be disjoint for 
different predecessor states $\statep$. 
Hence, we cannot estimate $\bigcup_{(\statep_{\bb},\bb,\stateq)\in \trans} \lang{\stlvl{\statep_{\bb}}{\lvl-1}}$ by merely summing the individual estimates 
$\setpred{\approxcnt{\stlvl{\statep_{\bb}}{\lvl-1}}}{(\statep_{\bb},\bb,\stateq)\in \trans}$. To address this, we maintain a set, $\sample{\stlvl{\statep}{\lvl-1}}$, for each $\statep$, which contains samples of strings from $\lang{\stlvl{\statep}{\lvl-1}}$. One possible idea is to
 construct the set $\sample{\stlvl{\statep}{\lvl-1}}$ using strings $w$ that have been sampled uniformly at random from $\lang{\stlvl{\statep}{\lvl-1}}$ but such a process introduces dependence among the sets $\sample{\stlvl{\statep}{\lvl-1}}$  for different  $\statep$. The crucial observation in~\cite{ACJR21} is that for any $k$, the following is true: for any word $w$ of length $k$, all words from $\lang{\stlvl{\stateq}{\lvl}}$ with the suffix $w$ can be expressed as $\big(\bigcup\limits_{\statep \in P} \lang{\stlvl{\statep}{\lvl-k}}\big) \cdot \set{w}$ for some set $P$ of states, where $P$ depends only on $w, \stateq, $ and $k$. This property is referred to as the {\em self-reducible union} property. This insight allows us to utilize the {\em self-reducibility} property for sampling. We can partition $\lang{p^{\lvl-1}}$ into sets of strings where the last character is either $\zz$ or $\oo$, and compute estimates for each partition since each partition can be represented as a union of sets for which we (inductively) have access to both samples and estimates. We then sample the final character in proportion to the size of these estimates. Similarly, we can sample strings, character by character. That is, we can sample strings by progressively extending suffixes of $w$ backwards! Of course, we must also account for the errors that propagate in the estimates of $\approxcnt{\stlvl{\stateq}{\lvl}}$.

Since our algorithm also relies on the {\em self-reducible union} property, 
the basic structure of our algorithm is similar to that 
of~\cite{ACJR21} on high-level. 
There are, however, significant and crucial technical differences that allow us
to obtain an FPRAS with significantly lower time complexity. 
We highlight these differences below by first
highlighting the two key properties underpinning 
the algorithm and ensuing analysis in~\cite{ACJR21}: 
\begin{enumerate}[leftmargin=1.7cm]
	\item[{\bf (ACJR-1).}] For every level $\lvl$, the following condition holds 
	with high probability (here, $\kappa = \frac{nm}{\epsl}$):
\begin{align*}
\begin{array}{l}
   \xi(\lvl) 
   \triangleq\forall \stateq{\in}\states, \forall P{\subseteq}\states, \ \left| 
   \frac{|\sample{\stlvl{\stateq}{\lvl}}\setminus \bigcup\limits_{\statep \in P} \lang{\statep^{\lvl}}|}{|\sample{\stlvl{\stateq}{\lvl}}|}
   -
   \frac{|\lang{\stateq^{\lvl}}\setminus \bigcup\limits_{\statep \in P} \lang{\statep^{\lvl}}|}{|\lang{\stateq^{\lvl}}|} \right| \leq \frac{1}{\kappa^3} 
\end{array}
\end{align*}

\item[{\bf (ACJR-2).}] $\sample{\stlvl{\stateq}{\lvl}}$ is close in $(L_{\infty}$ distance)\footnote{
	In \cite{ACJR21}, the invariant is stated in a different, but equivalent, formulation: conditioned on $ \xi(\lvl)$, the set $\sample{\stlvl{\stateq}{\lvl}}$ follows the distribution of a multi-set obtained by independently sampling every element of $\lang{\stateq^{\lvl}}$ uniformly at random, with replacement.}
 to  the distribution of multi-set obtained by independently  sampling (with replacement) every element of $\lang{\stateq^{\lvl}}$ uniformly at random. 
\end{enumerate}

\noindent
In contrast, the FPRAS we propose maintains the following two weaker invariants: 
\begin{enumerate}[leftmargin=1.6cm]
	\item[{\bf (Inv-1).}] For every state $\stateq$ and level $\lvl$, 
 the event that $\approxcnt{\stlvl{\stateq}{\lvl}} \in  (1 \pm \frac{\epsl}{2n^2})|\lang{\stateq^{\lvl}}|$, which we denote as $\accurateEstevent_{\stateq, \lvl}$, happens with high probability. 
 
 \item[{\bf (Inv-2).}] The distribution of $\sample{\stlvl{\stateq}{\lvl}}$ is 
 close, \emph{in total variation distance}, to the distribution of a multi-set constructed by 
 sampling (with replacement) every element of $\lang{\stateq^{\lvl}}$ uniformly at random. 
\end{enumerate}

Two remarks are in order. 
First,  $\xi(\lvl)$ implies $\accurateEstevent_{\stateq, \lvl}$, and closeness in $L_{\infty}$ is a stringent condition than closeness in total variation distance. 
Second, the technical arguments of ~\cite{ACJR21} crucially rely on their invariants and do not work with the aforementioned weaker arguments.
 Accordingly, this necessitates a significantly different analysis that depends on coupling of the distributions. It's also noteworthy that, due to our reliance on the weaker condition, our method for estimating $\approxcnt{\stlvl{\stateq}{\lvl}}$ differs from that in~\cite{ACJR21}. In fact, our approach is based on an adaptation of the classical Monte Carlo-based computation of the union of sets~\cite{KarpL85}.

For a given state $\stateq$ and subset $P$, achieving a $\frac{1}{\kappa^3}$-additive approximation for $\frac{|\lang{\stateq^{\lvl}}\setminus \cup_{\statep \in P} \lang{\statep^{\lvl}}|}{|\lang{\stateq^{\lvl}}|}$ with a probability of $1-\eta$ requires, according to standard concentration bounds, that the size of $\sample{\stlvl{\stateq}{\lvl}}$ be $O(\kappa^6 \log \eta^{-1})$. To ensure this additive approximation for all states $\stateq$ and subsets $P$, one must consider union bounds over exponentially many events. Specifically, for the analysis to be successful, one must have $\eta^{-1} \in O(2^{mn})$. Consequently, the FPRAS of~\cite{ACJR21} maintains $O(\kappa^7)$ samples for each state $\stateq$. In contrast, for $\accurateEstevent_{\stateq, \lvl}$ to hold with a probability of $1-\eta$, our FPRAS only maintains $O\left(\frac{n^4}{\epsl^2}\right)\log \eta^{-1}$ samples in $\sample{\stlvl{\stateq}{\lvl}}$. Furthermore, a union bound over only $O(mn)$ events is required, thus setting $\eta^{-1} = O(mn)$ is adequate.

The time complexity of
an algorithm that follows the template outlined in~\figref{template2}
(including both the FPRAS of~\cite{ACJR21} as well as the FPRAS developed in this paper), has a quadratic depends on  $|\sample{\stlvl{\stateq}{\lvl}}|$. 
Accordingly, it is instructive to compare the bound of 
$O(\frac{m^7 n^7}{\epsl^7})$ for  $|S(q^{\lvl})|$ in ~\cite{ACJR21} 
to  $O(\frac{n^4}{\epsl^2})$ in our approach. 
Remarkably, the number of samples for every state in our approach is independent of $m$,
the size of the automaton.

\myparagraph{Organization}{
	In \secref{prelim} we recall some helpful background as well as
	introduce a key concept (distribution entanglement) useful for the rest of the paper.
	In \secref{fpras}, we describe our algorithm, together with two auxiliary algorithms
	and state their formal correctness guarantees, whose analysis we
	present in \secref{technical}.
	We conclude in \secref{conclusions}.
}


\section{Preliminaries}
\seclabel{prelim}

	\myparagraph{NFA and words}{
	We consider the binary alphabet  $\alphabet = \set{\zz, \oo}$. 
	All results in this paper are extendable to alphabets of arbitrary but fixed constant size. 
	A string $w$ over $\alphabet$ is either the empty string $\emptystr$ (of length $0$)
	or a non-empty finite sequence $w_1w_2\ldots w_k$ (of length $|w| = k>0$) where each $w_i \in \alphabet$.
	The set of all strings over $\alphabet$ is denoted by $\alphabet^*$.
	A non-deterministic finite automaton is a tuple
	$\aut = (\states, \init, \trans, \accept)$ where $\states$ is a finite set of states,
	$\init \in \states$ is the initial state, 
	$\trans \subseteq \states \times \alphabet \times \states$
	is the transition relation, 
	and $\accept$ is the set of accepting states.
	A run of $w$ on $\aut$ is a sequence
	$\rho = \stateq_0, \stateq_1 \ldots, \stateq_k$ such that
	$k = |w|$, $\stateq_0 = \init$ and for every $i < k$, 
	$(\stateq_i, w_{i+1}, \stateq_{i+1}) \in \trans$;
	$\rho$ is accepting if $\stateq_k \in \accept$.
	The word $w$ is accepted by $\aut$ if there is an accepting run of $w$ on $\aut$.
	The set of strings accepted by $\aut$ is $\lang{\aut}$.
	For a natural number $n \in \nats$, the $n^\text{th}$ slice
	of $\aut$'s language, $\lang{\aut_n}$,
	is the set of strings of length $n$ in $\lang{\aut}$.
	For simplicity, this paper assumes a singleton set of accepting states,
	i.e., $\accept = \set{\stateq_\accept}$ for some $\stateq_\accept \in \states$.
	For $\stateq \in \states$ and $\bb \in \alphabet$, 
	the $\bb$-predecessors of $\stateq$ are given by the set
	$\pred{\stateq}{\bb} = \setpred{\statep}{(\statep, \bb, \stateq )\in \trans}.$
	
	Given automaton $\aut$ and a number $n$ in unary, 
we construct the unrolled automaton $\autunroll$, and assume that all states in $\autunroll$
are reachable from the initial state.
We use the notation $\stlvl{\stateq}{\lvl}$ to denote the $\lvl^\text{th}$
copy of state $\stateq \in \states$. 
 For each state $q$ and level $0\leq \lvl \leq n$, 
we use $\lang{\stlvl{\stateq}{\lvl}}$ to denote the the set of all distinct strings $w$ 
of length $\lvl$ for which there is a path (labeled by $w$) from the starting state to $\stateq$.
}


\myparagraph{Total Variation Distance}{The total variation distance
between two random variables $X,Y$ over domain
$\Omega$  is defined as:
\begin{align*}
	\TVDist{X, Y} = \sum_{\omega \in \Omega} \Pr[X = \omega] - \min(\Pr[X = \omega], \Pr[Y = \omega]).
\end{align*}
}

\subsection{A New Notation: Distribution Entanglement}
\seclabel{distribution-entaglement}
We introduce a new notation, referred to as {\em distribution entanglement}, 
to argue about the probability of events for cases when the distribution of a certain random 
variable were to mimic distribution of another random variable. 
For an event $\mathcal{E}$ and random variables $X$ and $Y$, 
we use $\probentangle{\mathcal{E}}{X}{Y}$ to denote the probability of $\mathcal{E}$ 
when $X$ has \emph{distribution entangled to} $Y$, which is defined as follows:
\begin{align*}
	\probentangle{\mathcal{E}}{X}{Y} = \sum_{\omega \in \Omega} \prob{Y=\omega} \cdot \prob{\mathcal{E} | X=\omega}.
\end{align*}	 
where $\Omega$ is the support of $Y$. The above notion is well-defined only 
when it is the case for all $\omega \in \Omega$, we have $\Pr[X= \omega] >0$. 
Likewise, we can also extend this notion for conditioned events.
That is, for events $\mathcal{E}$, $\mathcal{F}$ and random variables $X$ and $Y$,
the conditional probability of $\mathcal{E}$ given $\mathcal{F}$ when $X$ 
has distribution entangled to $Y$is defined as follows:
\begin{align*}
	\probentangle{\mathcal{E} | \mathcal{F}}{X}{Y} = \sum_{\omega \in \Omega} \prob{Y=\omega} \cdot \prob{\mathcal{E} | \mathcal{F} \cap X=\omega}.
\end{align*}
Whenever $X$ is clear from the context, we will omit it from the notation.



\section{A Faster FPRAS}
\seclabel{fpras}

In this section 
we formally present our FPRAS for \#NFA,
together with the statement of its correctness and running time (\thmref{main}).
Our FPRAS \algoref{main} calls two subroutines \appdel (\algoref{approx-delphic}) 
and \SampleFun (\algoref{sampling}), 
each of which are presented next, together with
their informal descriptions and formal statements of correctness.
\algoref{main} presents the main procedure for approximating the size of 
$\lang{\aut_n}$, where $\aut$ is the NFA and $n$ is the parameter that is given as input. 
The algorithm works in a  dynamic programming fashion
and maintains, for each state $\stateq$ and each length $1 \leq \lvl \leq n$, 
a sampled subset $\sample{\stlvl{\stateq}{\lvl}}$ of $\lang{\stlvl{\stateq}{\lvl}}$. 
As a key step, it uses the sampled sets corresponding to  
$\pred{\stateq}{\zz}$ and $\pred{\stateq}{\oo}$ to construct the desired
sampled subset of $\lang{\stlvl{\stateq}{\lvl}}$. 
We first describe  the  two subroutines the \appdel and \SampleFun.


\subsection{Approximating Union of Sets}

\appdel (\algoref{approx-delphic}) 
takes parameters $\epsl, \dl$ and 
inputs $\epsl_{\sz}$ and (access to)
sets $T_1, \dots, T_k$. 
For each set $T_i$ the access is given in the form of a membership oracle $O_i$,
a subset $S_i$ (obtained by sampling $T_i$ with replacements) presented as a list, 
and an estimate $\sz_i$ of the size of $T_i$. 
Using these, \algoref{approx-delphic} outputs an $(\epsl, \dl)$ 
estimate of the size of the union $\cup_{i=1}^k T_i$.

The precise algorithm presented here represents a modification of the
classic Monte Carlo-based scheme due to Karp and Luby~\cite{KarpL85}.
The proof of its correctness also shares similarities with~\cite{KarpL85}
and we summarize the intuition here.
For a $\sigma \in \cup_{i=1}^{k}T_i$, 
let $L(\sigma)$ denote the smallest index $i \in \set{1, 2, \ldots, k}$
such that $\sigma \in T_{i}$.
Now, let  $\mathcal{U}_\textsf{unique}$ be the set of all pairs 
$(\sigma, L(\sigma))$ and let 
 $\mathcal{U}_\textsf{multiple}$ be the set of all 
 $(\sigma, i)$ pairs such that $\sigma \in T_i$. 
 Consequently, the cardinality of the set $\mathcal{U}_\textsf{unique}$ is 
 $|\cup_{i=1}^k T_i|$,  
 while the cardinality of $\mathcal{U}_\textsf{multiple}$ is $\sum_{i=1}^k |T_i|$. 
 By drawing sufficiently-many ($t$, to be precise) samples from $\mathcal{U}_\textsf{multiple}$ 
 and assessing the fraction belonging to $\mathcal{U}_\textsf{unique}$, 
 one can estimate  $|\mathcal{U}_\textsf{unique}|/|\mathcal{U}_\textsf{multiple}|$, 
 which when multiplied by  $|\mathcal{U}_\textsf{multiple}|$ provides an estimate for the size of 
 $\mathcal{U}_\textsf{unique}$, which is the desired output.


\begin{algorithm}[t]
	\caption{Approximating Union of Sets}
	\algolabel{approx-delphic}
	\SetInd{0.3em}{1em}
	\myfun{\AppDelFun{$\epsl_{\sz}, (O_1, S_1, \sz_1), \ldots, (O_k, S_k, \sz_k)$}}{
		$m \gets   \left\lceil\frac{\sum_j \sz_j}{\max_j \sz_j}\right\rceil  $\; 
		$t \gets \frac{12\cdot (1+\epsl_{\sz})^2 \cdot m}{\varepsilon^2}\cdot \log(\frac{4}{\delta})$ \;
		$Y \gets 0$\;
		\RepTimes{$t$}{ \linelabel{ev-loopstart}
			Sample $i \in \set{1, \ldots, k}$ with  probability $p_i = \frac{\sz_{i}}{\sum_j \sz_j}$ \linelabel{ev-samplei}\;
			\lIf{$S_i \neq \emptyset$}{ \linelabel{ev-if}
					$\sigma \gets S_{i}\dequeue{}$ \linelabel{ev-samplesigma}
			}\lElse{
				\Break \linelabel{ev-break}
			}
			\lIf{$\neg(\exists j < i, O_j\contains{\sigma})$}{ \linelabel{ev-check}
				$Y \gets Y + 1$ \linelabel{ev-loopend}
			}
		}
	\Return $\frac{Y}{t} \cdot ({\sum_i \sz_i})$
	}
\end{algorithm}



After sampling an index $i$, in \lineref{ev-samplesigma}
an element is chosen (and removed) from the sampled list $S_i$. 
In the case when the set $S_i$ is constructed by uniformly selecting elements from $T_i$,
this step mimics drawing a random sample from the actual set $T_i$. 
Initially, each set (more precisely, list) $S_i$ is guaranteed to have 
a size of at least $\thresh = 24 \cdot \frac{(1+\epsl_{\sz})^2}{\epsl^2}\cdot \log(\frac{4k}{\delta})$, surpassing the expected number of samples required from $T_i$ during the 
$t$ iterations of the loop. 
If the algorithm ever needs more samples than $|S_i|$ during the $t$ iterations, 
it counts as an error, but the probability of this occurrence will be demonstrated to be very low.
Finally, \lineref{ev-check} verifies whether the sample from $\mathcal{U}_\textsf{multiple}$ 
belongs to  $\mathcal{U}_\textsf{unique}$ by asking a membership question to the oracle $O_i$.
If all the sampled sets $S_i$'s are constructed uniformly,
then the variable  $Y$ tallies the number of samples from $\mathcal{U}_\textsf{unique}$,
and after $t$ iterations, $Y/t$ provides a estimation for 
$|\mathcal{U}_\textsf{unique}|/|\mathcal{U}_\textsf{multiple}|$.  
The final output is the product of this value with $(\sum_i \sz_i)$ (which is the estimate for $|\mathcal{U}_\textsf{multiple}|$).

We next present (in \thmref{approx-delphic-correct})
the formal correctness statement of this algorithm using distribution entanglement 
(see \secref{distribution-entaglement}).
For this, we set the source random variable $X$
to be the one that corresponds to the product distribution
of the input sampled sets $S_1, \ldots, S_k$,
while the target random variable $Y$ corresponds to the product 
distribution of $\accurateSmpevent{1}, \ldots, \accurateSmpevent{k}$,
where, $\accurateSmpevent{i}$, is the 
random variable that obeys the distribution obtained when constructing
a subset of $T_i$ (of at least $\thresh$ many elements) by uniformly picking each element of $T_i$.
Since the random variables $S_1, \ldots, S_k$ are clear from context,
we will use $\probunif{\mathcal{E}}{\accurateSmpevent{1, \ldots, k}}$ 
to denote the probability of event $\mathcal{E}$ in the resulting  entanglement distribution.
For well definedness, we require that $\prob{S_i = \omega} > 0$
for every subset $\omega \subseteq T_i$ (with $\omega \geq \thresh$),
and will ensure this each time we invoke this statement.

\begin{theorem}
\thmlabel{approx-delphic-correct}
Let $\epsl, \dl > 0$, and let $\Omega$ be some set.
Let $T_1, T_2, \ldots, T_k \subseteq \Omega$ be sets with membership oracles
$O_1, \ldots, O_k$ respectively.
Let $\epsl_\sz \geq 0$ and
let $\sz_1, \ldots, \sz_k \in \nats$ be such that for every $i \leq k$, 
we have $\frac{\norm{T_i}}{(1+\epsl_\sz)}\leq \sz_i \leq (1+\epsl_\sz) \norm{T_i}$.
Let $\thresh = 24 \cdot \frac{(1+\epsl_{\sz})^2}{\epsl^2}\cdot \log(\frac{4k}{\delta})$.
For each $i \leq k$, let $S_i$ be some sequence of samples 
(with possible repetitions), obtained according to some distribution
such that $|S_i| \geq \thresh$
and for every $\omega \subseteq T_i$ with $|\omega| \geq \thresh$, $\prob{S_i = \omega} > 0$.
Then, on input $\epsl_{\sz}, (O_1, S_1, \sz_1), \ldots, (O_k, S_k, \sz_k)$,
the algorithm \appdel with parameters $\epsl, \dl$
(\algoref{approx-delphic}) outputs $\sz$ that satisfies:
\begin{align*}
\equlabel{EVCorrectness}
 \probunif{\frac{|\cup_{i=1}^k T_i|}{(1+ \epsl)(1 + \epsl_{\sz})} \leq \norm{\sz} < (1 + \epsl)(1+\epsl_{\sz})|\cup_{i=1}^k T_i| \,}{\accurateSmpevent{1, \ldots, k}} > 1 - \dl.
\end{align*}
The  algorithm makes $O\left(k\cdot (1+\epsl_{\sz})^2 \cdot \frac{1}{\epsl^2} \cdot \log\big(\frac{k}{\dl}\big)\right)$ number of  membership calls to the oracles $O_1, \ldots, O_k$ in the 
worst case  and the rest of the computation takes a similar amount of time.
\end{theorem}

\subsection{Sampling Subroutine}

The sampling subroutine (\algoref{sampling}) takes as input a number $\lvl$, 
a set of states $\stlvl{P}{\lvl}$ (at level $\lvl$ of the unrolled automaton $\autunroll$), 
a string $w$ (of length $n-\lvl$), 
a real number $\varphi$ (representing a probability value), 
an error parameter $\beta$ and a confidence parameter $\eta$,
and outputs a string sampled from the language 
$\cup_{\stateq \in \stlvl{P}{\lvl}} \lang{\stlvl{\stateq}{\lvl}}$.  
\algoref{sampling} is a recursive algorithm.
In the base case ($\lvl=0$), it outputs the input string $w$ with probability $\varphi$, 
and with the remaining probability, it outputs $\bot$. 
In the inductive case ($\lvl > 0$), the algorithm computes estimates 
$\set{\sz_{\bb}}_{\bb \in \set{\zz, \oo}}$,
where $\sz_{\bb}$ is the estimate of the size of 
$\cup_{\statep_{\bb,i}\in \pred{\stateq}{\bb}} \lang{\stlvl{\statep_{\bb,i}}{\lvl-1}}$.
For this, it calls $\AppDelFun$ with previously computed estimates
of states at previous levels $\set{\approxcnt{\stlvl{\stateq}{\lvl-1}}}_{\stateq \in \states}$
and sampled subsets of strings $\set{\sample{\stlvl{\stateq}{\lvl-1}}}_{\stateq \in \states}$,
and uses the unrolled automaton $\autunroll$ as a proxy for membership oracles (see the last argument
to $\AppDelFun$ in \lineref{sample-appunion-call}).
Next, $\bb$ is chosen randomly to be $\zz$ or $\oo$ with
probability proportional to the estimates 
$\sz_{\oo}$ and $\sz_{\zz}$ respectively (\lineref{sampling7}). 
Once $\bb$ is chosen the set $P^{\lvl}$ is updated to $P_{\bb}^{\lvl-1}$ 
and the string $w$ is updated to $\bb\cdot w$. 
Subsequently, the \SampleFun{}
subroutine is called recursively 
with the updated subset of states and the updated $w$. 


\begin{algorithm}[t]
\caption{Sampling subroutine for $\autunroll$, length $n$}
\algolabel{sampling}
\SetInd{0.3em}{1em}
\myfun{\SampleFun{$\lvl$, $\stlvl{\statesP}{\lvl}$, $w$, $\varphi$,$\beta$, $\eta$}}{
$\eta' \gets \eta/4n$\; 
$\beta' \gets (1+\beta)^{\ell-1} - 1$\;
\If{$\lvl = 0$}{
    \lIf{$\varphi > 1$}{\Return $\bot$} \linelabel{overflow}
    \lSmpBern{$\varphi$}{\Return $w$}{$1-\varphi$}{\Return $\bot$ \linelabel{smallprob}}
}
\Else{
    \ForEach{$\bb \in \set{\zz, \oo}$}{
            $\stlvl{\statesP}{\lvl}_{\bb} \gets \bigcup_{\statep \in \stlvl{\statesP}{\lvl}} \pred{\statep}{\bb}$ \;
            \Let $\stlvl{\statesP}{\lvl}_{\bb}$ be the set $\set{\statep_{\bb,1}, \ldots, \statep_{\bb,k_{\bb}}}$ \;
            $\sz_{\bb} \gets \appdel_{\beta,\eta'}\big($
            $\beta', (\sample{\stlvl{\statep}{\lvl-1}_{\bb,1}}, \approxcnt{\stlvl{\statep}{\lvl-1}_{\bb,1}}), \ldots, (\sample{\stlvl{\statep}{\lvl-1}_{\bb,k_{\bb}}}, \approxcnt{\stlvl{\statep}{\lvl-1}_{\bb,k_{\bb}}}),\autunroll$ 
            $\big)$ \linelabel{sample-appunion-call}
        }
    $\mathsf{pr}_{\zz} \gets \frac{ \displaystyle \sz_{\zz} }{\displaystyle \sz_{\zz} + \sz_{\oo} }$ \;
    \lSmpBern{$\mathsf{pr}_{\zz}$}{$\bb \gets \zz$}{$1-\mathsf{pr}_{\zz}$}{$\bb \gets \oo$} \; \label{line:sampling7}
    $\stlvl{\statesP}{\lvl-1} \gets \stlvl{\statesP}{\lvl}_{\bb}$ \;
    $w \gets \bb \concat w$
}
 \Return \SampleFun{$\lvl-1$, $\stlvl{\statesP}{\lvl-1}$, $w$, $\varphi/\mathsf{pr}_{\bb}, \beta, \eta$}
 }
\end{algorithm}

We next tread towards formally characterizing the 
guarantee of \algoref{sampling}. To this end, we will introduce few notations: 
\begin{itemize}[left=0cm]
    \item
    For each $1\leq i \leq \lvl$ and each $\stateq \in \states$,
    let $\accurateEstevent_{i,\stateq}$ be the event 
    that
    $\frac{|\lang{\stlvl{\stateq}{i}}|}{(1+\beta)^{i}} \leq \approxcnt{\stlvl{\stateq}{i}}  \leq (1+\beta)^{i} |\lang{\stlvl{\stateq}{i}}|$.
    Also, let $\accurateEstevent_{i} = \bigcap\limits_{\stateq \in \states} \accurateEstevent_{i, \stateq}$
    and $\accurateEstevent_{\leq i} = \bigcap\limits_{1\leq j \leq i} \accurateEstevent_{j}$.

    \item
    For a state $\stlvl{\stateq}{\lvl}$,
    we use $\usample{\stlvl{\stateq}{\lvl}}$ 
    to represent the random variable denoting the sequence of 
    samples (with possible repetitions) constructed by 
    repeatedly sampling 
    $|\sample{\stlvl{\stateq}{\lvl}}|$ number of elements 
    uniformly from $\lang{\stlvl{\stateq}{\lvl}}$.
    Let ${\sf S}_{\leq i}$ be the random variable corresponding to the product of 
    $\set{\sample{\stlvl{\stateq}{j}}}_{\stateq \in \states, 0 \leq j \leq i}$.
    Also, let $\accurateSmpevent{\leq i}$ represent the random variable 
    obtained by taking the ordered profuct of all $\set{\usample{\stlvl{\stateq}{j}}}_{\stateq \in \states, 0 \leq j \leq i}$.
    In our algorithms, we will use 
    $\probunif{\mathcal{E}}{\accurateSmpevent{\leq i}}$
    (resp. $\probunif{\mathcal{E} | \mathcal{F}}{\accurateSmpevent{\leq i}}$)
    to denote the probability (resp. conditional probability)
    of $\mathcal{E}$ (resp. $\mathcal{E}$ conditioned on $\mathcal{F}$)
    when ${\sf S}_{\leq i}$ is distribution entangled to $\accurateSmpevent{\leq i}$.


\item Consider the random trial corresponding to the call to the procedure
$\SampleFun(\lvl, \set{\stlvl{\stateq}{\lvl}}, \emptystr, \gamma_0, \beta, \eta)$ for $\lvl > 0$
and let $\rtrnsmp$ denote the random variable representing the return value thus obtained.
Observe that, in each such trial, the function $\appdel$ is called $2\lvl$ times, 
twice for each level $1 \leq i \leq \lvl$.
Let us now define some events of interest.

\item 
    For each $1\leq i \leq \lvl$ and $\bb \in\set{\zz,\oo}$, by
    $\accurateEVevent_{i, \bb}$ we denote the event that the 
    $\bb^\text{th}$ call to $\appdel$
    corresponding to level $i$ returns a value $\sz^i_{\bb}$ that satisfies
    $\frac{1}{(1+\beta)^i} \cdot |\lang{P^i_{\bb}}| 
    \leq 
    \sz^i_{\bb} 
    \leq 
    (1+\beta)^i \cdot |\lang{P^i_{\bb}}|$. 
    Here, $P^i$ is the argument of the call at level $i$,
    and $P^i_{\bb}$ is the $\bb$-predecessor of $P^i$. 
    For ease of notation, we also define 
    $\accurateEVevent = \bigcap_{
        \begin{subarray}{l}
        1 \leq i\leq \lvl \\ 
        \bb \in \{\zz, \oo\} 
        \end{subarray}
    }\accurateEVevent_{i, \bb} $
    
 \item 
    Let $\failevent_1$ be the event that $\rtrnsmp=\bot$
    is returned because $\varphi > 1$ at the time of return (\lineref{overflow}).
    Let $\failevent_2$ be the event that $\rtrnsmp=\bot$
    is returned at \lineref{smallprob}.
    Finally, let $\failevent = \failevent_1 \uplus \failevent_2$.

\end{itemize}

The following presents the correctness of~\algoref{sampling},
while its proof is presented in~\secref{samplingcorrectness}.
\begin{theorem}
\thmlabel{correctness-sampling-subroutine}

Let $\gamma_0 = \czero \cdot \frac{1}{\approxcnt{\stlvl{\stateq}{\lvl}}}, \beta \leq \frac{1}{4\cdot n^2}$ and $\set{\sample{\stlvl{\statep}{i}}}_{1\leq i < \lvl, \statep \in \states}$ satisfy     
   \[\forall\; 1 \leq i \leq \lvl, \forall \statep \in \states, \;\;  
     |\sample{\stlvl{\statep}{i}}| \geq \frac{24 \cdot e}{\beta^2} \log \left(\frac{16mn}{\eta}\right).\]
 
Then for the trial corresponding to the invocation 
$\SampleFun(\lvl, \set{\stlvl{\stateq}{\lvl}}, \emptystr, \gamma_0, \beta, \eta)$
the following hold
\begin{enumerate}[label=(\arabic*)]
    \item\itmlabel{uniform-on-support}
    For each $w \in \lang{\stlvl{\stateq}{\lvl}}$,
        $\prob{ \rtrnsmp = w \;\; \big\vert 
                \accurateEVevent \cap \accurateEstevent_{\lvl, \stateq}
                 } = \gamma_0$
    
    
    \item\itmlabel{failure-given-accurate}
$\prob{ \failevent \mid   \accurateEVevent \cap \accurateEstevent_{\lvl, \stateq} }
    \leq 1 - \frac{2}{3e^2}
$
    
    \item\itmlabel{accurate-given-good}
 $       \probunif{\accurateEVevent  \mid 
        \accurateEstevent_{\leq \lvl-1}}{\accurateSmpevent{\leq \lvl-1}}
        \geq 1-\frac{\eta}{2}$
\end{enumerate}
\end{theorem}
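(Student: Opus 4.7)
\textbf{Proof proposal for Theorem \ref{thm:correctness-sampling-subroutine}.}
My plan is to analyze the three conclusions in order, since \ref{itm:failure-given-accurate} will essentially follow from \ref{itm:uniform-on-support} by summation, and \ref{itm:accurate-given-good} is a separate union-bound argument over the $\appdel$ calls.

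For \ref{itm:uniform-on-support}, I would observe that, given a target string $w = b_1 b_2 \cdots b_\lvl \in \lang{\stlvl{\stateq}{\lvl}}$, the sequence of recursive choices of $\bb$ that could produce $w$ is fully determined by the characters of $w$: at the recursive call with set $\stlvl{P}{i}$, the only choice that keeps $w$ reachable is $\bb = b_i$, in which case $\stlvl{P}{i-1} = \stlvl{P}{i}_{b_i}$. Hence
\[
\prob{\rtrnsmp = w} \;=\; \bigl(\prod_{i=1}^\lvl \mathsf{pr}_{b_i}^{(i)}\bigr)\cdot \frac{\gamma_0}{\prod_{i=1}^\lvl \mathsf{pr}_{b_i}^{(i)}} \;=\; \gamma_0,
\]
\emph{provided} neither $\failevent_1$ (the overflow branch at \lineref{overflow}) nor any zero-probability branch is taken. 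To rule out $\failevent_1$, I would establish a lower bound on $\prod_i \mathsf{pr}_{b_i}^{(i)}$ using $\accurateEVevent$. Under $\accurateEVevent_{i,\bb}$ the numerator $\sz_\bb^{(i)}$ of $\mathsf{pr}_\bb^{(i)}$ is within $(1+\beta)^{\pm i}$ of $|\lang{\stlvl{P}{i-1}_\bb}|$, and the denominator $\sz_\zz^{(i)}+\sz_\oo^{(i)}$ is within $(1+\beta)^{\pm i}$ of $|\lang{\stlvl{P}{i}}|$. Telescoping the true fractions $|\lang{\stlvl{P}{i-1}}|/|\lang{\stlvl{P}{i}}|$ along the trajectory yields $\prod_i \mathsf{pr}_{b_i}^{(i)} \geq (1+\beta)^{-O(\lvl^2)} \cdot 1/|\lang{\stlvl{\stateq}{\lvl}}|$. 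Combined with the bound $\gamma_0 \leq \frac{2}{3e}\cdot(1+\beta)^{\lvl}/|\lang{\stlvl{\stateq}{\lvl}}|$ obtained from $\accurateEstevent_{\lvl,\stateq}$ and the choice $\beta \leq 1/(4n^2)$, $\lvl \leq n$, one gets $\varphi \leq 1$ at every stage, so the overflow branch is never entered.

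For \ref{itm:failure-given-accurate}, since $\failevent_1$ has conditional probability $0$ (as just argued), $\failevent = \failevent_2$, and $1 - \prob{\failevent_2} = \sum_{w \in \lang{\stlvl{\stateq}{\lvl}}}\prob{\rtrnsmp = w} = |\lang{\stlvl{\stateq}{\lvl}}|\cdot \gamma_0$. Using $\approxcnt{\stlvl{\stateq}{\lvl}}\leq (1+\beta)^\lvl |\lang{\stlvl{\stateq}{\lvl}}|$ and $(1+\beta)^\lvl \leq e^{\lvl \beta} \leq e^{1/(4n)} \leq e$, this lower bound becomes $\frac{2}{3e^2}$, which is exactly the required bound.

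For \ref{itm:accurate-given-good}, the idea is to apply \thmref{approx-delphic-correct} to each of the $2\lvl$ invocations of $\appdel$, and conclude by a union bound. Under the coupling $\accurateSmpevent{\leq \lvl-1}$, each $\sample{\stlvl{\statep}{i}}$ is literally an i.i.d.\ uniform sample from $\lang{\stlvl{\statep}{i}}$, so the sampling hypothesis of \thmref{approx-delphic-correct} is met. Conditioning on $\accurateEstevent_{\leq \lvl-1}$, the estimates $\approxcnt{\stlvl{\statep}{i}}$ satisfy the $\epsl_\sz = \beta' = (1+\beta)^{\lvl-1}-1$ hypothesis, and the sample-size hypothesis $|\sample{\cdot}| \geq \thresh$ reduces to verifying $\frac{24 e}{\beta^2}\log(16mn/\eta)$ dominates $\frac{24(1+\beta')^2}{\beta^2}\log(16mn/\eta)$, which holds because $(1+\beta')^2 \leq e^{2\lvl\beta} \leq e^{1/(2n)} \leq e$. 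Each call then fails with probability at most $\eta' = \eta/(4n)$, and a union bound over $2\lvl \leq 2n$ calls gives the promised $\eta/2$.

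The main technical care is in part \ref{itm:uniform-on-support}: keeping track of the quadratic-in-$\lvl$ multiplicative drift that accumulates both in the telescoping lower bound on $\prod \mathsf{pr}$ and in the bound on $\gamma_0/|\lang{\stateq^\lvl}|$, and verifying that the choice $\beta \leq 1/(4n^2)$ is tight enough to absorb it while still leaving the $2/(3e)$ slack in $\gamma_0$ unconsumed so as to guarantee $\varphi \leq 1$. Parts \ref{itm:failure-given-accurate} and \ref{itm:accurate-given-good} should then be essentially bookkeeping.
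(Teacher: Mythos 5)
Your proposal follows the paper's proof essentially step for step: part~\itmref{uniform-on-support} by telescoping the per-level probabilities and bounding the final $\varphi$ via $\accurateEVevent$ and $\accurateEstevent_{\lvl,\stateq}$ together with $(1+\beta)^{\lvl(\lvl+2)}\leq e$, part~\itmref{failure-given-accurate} by summing over $w\in\lang{\stlvl{\stateq}{\lvl}}$, and part~\itmref{accurate-given-good} by checking the hypotheses of \thmref{approx-delphic-correct} under the $\accurateSmpevent{\leq\lvl-1}$ coupling and union-bounding over the $2\lvl$ calls to $\appdel$. The only cosmetic difference is that you lower-bound $\prod_i\mathsf{pr}_{b_i}^{(i)}$ whereas the paper directly upper-bounds the final value of $\varphi$; these are the same calculation.
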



\subsection{Main Algorithm}

\begin{algorithm}[t]
\caption{Algorithm for estimating $|\lang{\aut_n}|$}
\algolabel{main}
\SetInd{0.3em}{1em}
\Input{NFA $\aut$ with $m$ states, $n \in \nats$  in unary}
\Parameters{Accuracy parameter $\varepsilon$ and confidence parameter $\delta$}

$\beta \gets \frac{\varepsilon}{4n^2}; \eta \gets \frac{\delta}{2\cdot n \cdot m}$\; 
$\NumSamples \gets \frac{4096 \cdot e \cdot n^4 }{\epsilon^2}\log \left(\frac{4096\cdot m^2 \cdot n^2\cdot \log (\varepsilon^{-2})}{\delta}\right)$\; 
$\NumSamplesExtra \gets \NumSamples \cdot 12\cdot(1-\frac{2}{3e^2})^{-1}\cdot \log (8/\eta)$\; 
Construct the labeled directed acyclic graph $\autunroll$ from $\aut$  \; \linelabel{line1}
\For{$\lvl \in \set{0, \ldots, n} \text{ and } \stateq \in \states$}{
    \If{$\lvl = 0$}{\linelabel{line2}
    	\If{$\stateq  = \init$}{
    	$\big(\approxcnt{\stlvl{\stateq}{\lvl}}, \sample{\stlvl{\stateq}{\lvl}}\big) \gets (1, {\emptystr})$
    }
    	\Else{
    	$\big(\approxcnt{\stlvl{\stateq}{\lvl}}, \sample{\stlvl{\stateq}{\lvl}}\big) \gets \big(0,\emptyset)$ \linelabel{line3}
    }
    }
    \Else{
        \ForEach{$\bb \in \set{\zz, \oo}$}{\linelabel{line4}
            \Let $\set{\statep_{\bb,1}, \ldots, \statep_{\bb,k_{\bb}}}$ be the set $\pred{\stateq}{\bb}$ \;
            $\beta' \gets (1 + \beta)^{\ell -1} -1$\;
            $\sz_{\bb} \gets \appdel_{\beta,\frac{\eta}{2}  \cdot(1  - \frac{1}{2^{n+1}})}\bigg(\beta',
            (\sample{\stlvl{\statep_{\bb,1}}{\lvl-1}},
            \approxcnt{\stlvl{\statep_{\bb,1}}{\lvl-1}}),
            \ldots, 
            (\sample{\stlvl{\statep_{\bb,k_{\bb}}}{\lvl-1}}, \approxcnt{\stlvl{\statep_{\bb,k_{\bb}}}{\lvl-1}}),
            \autunroll\bigg)$ \linelabel{line6}
        }
        \SmpBern{
            $1 - \frac{\eta}{2^{n}}$
        }{
            $\approxcnt{\stlvl{\stateq}{\lvl}} \gets \sz_{\zz} + \sz_{\oo}$
        }{
            $\frac{\eta}{2^{n}}$
        }{
            $\approxcnt{\stlvl{\stateq}{\lvl}} \xleftarrow{\text{Unif-smp}} \set{0, 1, \ldots, 2^{\lvl}} $
        }
        $\sample{\stlvl{\stateq}{\lvl}} \gets \emptyset$ \;
        \RepTimes{$\NumSamplesExtra$}{\linelabel{line7}
            \If{$|\sample{\stlvl{\stateq}{\lvl}}| < \NumSamples$}{
                $w \gets$ \SampleFun{$\lvl, \set{\stlvl{\stateq}{\lvl}}, \emptystr, \czero\cdot\frac{1}{\approxcnt{\stlvl{\stateq}{\lvl}}}, \beta, \frac{\eta}{2\cdot \NumSamplesExtra}$} \; 
                \If{$w \neq \bot$}{
                $\sample{\stlvl{\stateq}{\lvl}} \gets \sample{\stlvl{\stateq}{\lvl}}\append{w}$}\linelabel{line10}
            }\linelabel{line11}
            \lElse{\Break}
        }
        \Let $\alpha_{\stlvl{\stateq}{\lvl}} = \NumSamples{-}|\sample{\stlvl{\stateq}{\lvl}}|$ \linelabel{padding-start} \;
        \If{$\alpha_{\stlvl{\stateq}{\lvl}} > 0 $}{
        \Let $w_{\stlvl{\stateq}{\lvl}}$ be some word in $\lang{\stlvl{\stateq}{\lvl}}$ \;
        \lRepTimes{$\alpha_{\stlvl{\stateq}{\lvl}}$}{
            $\sample{\stlvl{\stateq}{\lvl}}{\gets}\sample{\stlvl{\stateq}{\lvl}}\append{w_{\stlvl{\stateq}{\lvl}}}$ \linelabel{padding-end}
            }
        }
    }
}
\Return $\approxcnt{\stlvl{\stateq}{n}_{\accept}}$
\end{algorithm}

\algoref{main} first constructs (in~\lineref{line1})
the labeled directed acyclic graph $\autunroll$. 
The algorithm then goes over all the states and all
levels $0 \leq \lvl \leq n$
to construct the sampled sets and the size estimates inductively.
In~\linerangeref{line2}{line3}, the algorithm caters for the
base case ($\lvl = 0$). 
In the inductive case (\linerangeref{line4}{line6})
the algorithm computes the estimate $\sz_{\oo}$ and $\sz_{\zz}$, 
where $\sz_{\bb}$ is the estimate of the size of 
$\cup_{\statep_{\bb,i}\in \pred{\stateq}{\bb}} \lang{\stlvl{\statep_{\bb,i}}{\lvl-1}}$. 
The size estimates of, and
a sample subsets of the sets $\set{\lang{\stlvl{\stateq}{\lvl-1}}}_{\stateq \in\states}$ 
are available inductively at this point.
Further, membership in these languages can be easily checked.
As a result, the algorithm uses the subroutine \appdel to compute the size 
of the union (in \lineref{line6}), with $\autunroll$ as the membership oracle as before. 
Once the size estimates $\sz_{\zz}$ and $\sz_{\oo}$ are 
determined, the algorithm constructs the sampled subset of 
$\lang{\stlvl{\stateq}{\lvl}}$ by making $\mathsf{xns}$ 
calls to the subroutine $\SampleFun$ (\linerangeref{line7}{line11});
if fewer than $\NumSamples$ many strings were sampled,
the algorithm simply pads one fixed string from $\lang{\stlvl{\stateq}{\lvl}}$
so that $\sample{\stlvl{\stateq}{\lvl}}$ eventually has size $\NumSamples$ (see \linerangeref{padding-start}{padding-end}).
\thmref{main} presents the desired correctness statement of our FPRAS
~\algoref{main}, and its proof is presented
in~\secref{mainalgo}.

\begin{theorem}
\thmlabel{main}
Given an NFA $\aut$ with $m$ states and  $n \in \nats$ (presented in unary), 
\algoref{main} returns $\mathsf{Est}$ such that the following holds:
\begin{align*}
    \prob{\frac{|\lang{\aut_n}|}{1+\epsl} \leq \mathsf{Est} \leq (1+\epsl) |\lang{\aut_n}|   } \geq 1-\dl
\end{align*}
Moreover, the algorithm has time complexity 
$\widetilde{O}((m^2n^{10} + m^3n^6)\cdot \frac{1}{\epsilon^4}\cdot\log^2(1/\delta))$, where the tilde hides for polynomial factors of $\log (m+n)$.
\end{theorem}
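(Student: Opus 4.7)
\smallskip
\noindent\textbf{Proof proposal.} The plan is to argue correctness by induction on the level $\lvl \in \set{0, 1, \ldots, n}$, maintaining two invariants after the iteration at level $\lvl$ completes: (i) an \emph{accuracy invariant} that $\accurateEstevent_{\leq \lvl}$ holds, i.e.\ for every state $\stateq$ and every $\lvl' \leq \lvl$, the estimate $\approxcnt{\stlvl{\stateq}{\lvl'}}$ lies in $[(1+\beta)^{-\lvl'}, (1+\beta)^{\lvl'}] \cdot |\lang{\stlvl{\stateq}{\lvl'}}|$; and (ii) a \emph{sampling invariant} that the joint distribution of $\set{\sample{\stlvl{\stateq}{\lvl'}}}_{\stateq \in \states,\, \lvl' \leq \lvl}$ is within total variation distance $O(\lvl \cdot \delta / n)$ of the ideal distribution in which each $\sample{\stlvl{\stateq}{\lvl'}}$ is replaced by $\usample{\stlvl{\stateq}{\lvl'}}$. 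The base case $\lvl = 0$ is immediate from \linerangeref{line2}{line3}. At $\lvl = n$, invariant (i) applied to $\accept$ combined with the choice $\beta = \epsl/(4n^2)$ gives $(1+\beta)^n \leq 1+\epsl$, yielding the desired approximation guarantee. A union bound over the $O(mn)$ state-level pairs with per-pair failure probability $O(\eta) = O(\delta/(mn))$ delivers the overall confidence $1-\dl$.

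For the inductive step, fix level $\lvl$ and state $\stateq$ and condition on both invariants holding through level $\lvl-1$. A coupling argument lets us replace each previously computed $\sample{\stlvl{\statep}{\lvl-1}}$ by the ideal $\usample{\stlvl{\statep}{\lvl-1}}$, at a TV cost absorbed by (ii); in the coupled world, the hypotheses of \thmref{approx-delphic-correct} and \thmref{correctness-sampling-subroutine} are satisfied thanks to the parameter choices $\beta = \epsl/(4n^2)$, $\eta = \delta/(2nm)$, and $\NumSamples \geq (24e/\beta^2)\log(16mn/\eta)$. Applying \thmref{approx-delphic-correct} to each of the two $\appdel$ invocations in \lineref{line6} (with $\epsl_\sz = (1+\beta)^{\lvl-1}-1$) shows that $\sz_\zz$ and $\sz_\oo$ are each $(1\pm\beta)$-accurate with probability $1-\eta/2$, so $\approxcnt{\stlvl{\stateq}{\lvl}} = \sz_\zz + \sz_\oo$ is within $(1+\beta)^\lvl$ of $|\lang{\stlvl{\stateq}{\lvl}}|$, establishing (i). Item (1) of \thmref{correctness-sampling-subroutine} gives that, on the good events $\accurateEVevent \cap \accurateEstevent_{\lvl,\stateq}$, each non-$\bot$ output of $\SampleFun$ is uniform on $\lang{\stlvl{\stateq}{\lvl}}$; item (2) lower-bounds the per-trial success probability by $2/(3e^2)$; item (3) bounds the failure of $\accurateEVevent$ by $\eta/2$. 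A Chernoff bound on $\NumSamplesExtra$ independent trials then yields at least $\NumSamples$ successful samples with probability $1-\eta/8$, and the residual TV error incurred at level $\lvl$ is $O(\eta)$, extending (ii).

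The time complexity is obtained by aggregating. Each membership check in $\autunroll$ costs $O(m^2 n)$, and a single $\appdel$ call with $k \leq m$ inputs issues $O(m \cdot n^4/\epsl^2 \cdot \log(mn/\delta))$ oracle queries, for a per-invocation cost of $O(m^3 n^5/\epsl^2 \cdot \log(mn/\delta))$. The main loop performs $O(mn)$ such $\appdel$ calls and $O(mn) \cdot \NumSamplesExtra = \widetilde{O}(mn^5/\epsl^2 \cdot \log(1/\delta))$ calls to $\SampleFun$; since each $\SampleFun$ trial executes $O(n)$ recursive levels with two $\appdel$ calls per level, it runs in $\widetilde{O}(m^3 n^6/\epsl^2 \cdot \log(1/\delta))$ time. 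Multiplying yields the claimed $\widetilde{O}(m^4 n^{11}/\epsl^4 \cdot \log^2(1/\delta))$ bound.

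The main obstacle is the coupling that implements invariant (ii). Unlike~\cite{ACJR21}, the stringent $L_\infty$ guarantee is unavailable, so one must argue directly that substituting $\usample{\stlvl{\statep}{\lvl-1}}$ for $\sample{\stlvl{\statep}{\lvl-1}}$ costs only additive TV distance, and then that the sampler at level $\lvl$ contributes only an additional $O(\eta)$. Controlling the accumulation of these TV distances across all $O(mn)$ sample sets without introducing a polynomial-in-$m$ dependence into $|\sample{\stlvl{\stateq}{\lvl}}|$ is the delicate part, and is precisely what permits the per-state sample count to be chosen independently of $m$.
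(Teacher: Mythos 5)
Your proposal follows essentially the same route as the paper's: the two invariants you name correspond directly to the paper's Lemma~\ref{lem:iterative-count-samp} (accuracy of $\approxcnt{\cdot}$ conditioned on prior accuracy and idealized samples) and Lemma~\ref{lem:sampbound-iterative} (TV-distance bound on $\sample{\cdot}$ versus $\usample{\cdot}$), the coupling step is exactly the paper's use of Lemma~\ref{lem:dist} to telescope the TV error across levels, the final union bound over $O(mn)$ state--level pairs with per-pair budget $\eta = \delta/(2nm)$ matches, and the time-complexity accounting (membership cost $O(m^2 n)$, $\appdel$ cost, $O(n)$ recursion depth in $\SampleFun$, $\NumSamplesExtra$ trials) is the paper's. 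Aside from slightly different constants in the Chernoff/confidence splits, this is the paper's proof.
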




\section{Technical Analysis}
\seclabel{technical}

We now present the detailed technical analyses of the algorithms presented in \secref{fpras}. 


\subsection{Correctness of~\algoref{approx-delphic}}
\applabel{approxunion}


The proof is in the same lines as that of Karp and Luby \cite{KarpL85}.
For the sake of completeness, we present a full proof. 

\begin{proof}[Proof (of \thmref{approx-delphic-correct})]
Let $\mathsf{Fail}$ be the event that the output of the algorithm 
is not within $\frac{|\cup_{i=1}^k T_i|}{(1+\epsl)(1+\epsl_{\sz})}$ and 
$|\cup_{i=1}^k T_i|{(1+\epsl)(1+\epsl_{\sz})}$. 
We will show that $\probunif{\mathsf{Fail}}{\accurateSmpevent{1,\ldots, k}}$ 
is upper-bounded by $\dl$.

Note that the algorithm in each of the $t$ runs of the loop 
(\linerangeref{ev-loopstart}{ev-loopend}) 
tries to draw an element from some $S_i$.
So if we assume that the size of the sets $S_i$ for all $i$ is more than $t$, 
then the condition in the \textbf{if} (in \lineref{ev-if}) will be satisfied 
and in that case the \textbf{else} clause in \lineref{ev-break} is redundant. 

We will prove the correctness of ~\algoref{approx-delphic} in two parts. 
In the first part we will prove that if the size of the sets $S_i$ 
is greater than $t$ then $\probunif{\mathsf{Fail}}{\accurateSmpevent{1,\ldots, k}} \leq \frac{\delta}{2}$. 
In the second part we will prove that if the sets $S_i$ has size {\thresh} 
(much smaller than $t$) then the probability that the algorithm will 
ever need the \textbf{else} clause in \lineref{ev-break} is less that $\delta/2$. 
The two parts combined proves that if the sets $S_i$s are of size 
{\thresh}, then $\probunif{\mathsf{Fail}}{\accurateSmpevent{1,\ldots, k}} \leq \delta$ 
which is what is claimed in the theorem. 

 For all $\sigma \in \cup_{i=1}^{k}T_i$ let $L(\sigma)$ denote 
 the smallest index $i$ such that $\sigma\in T_{i}$. In other words, 
$\sigma$ is in $T_{L(\sigma)}$ and for all $i< L(\sigma)$, $\sigma \not\in T_i$.  
Observe  $\mathcal{U}_{\textsf{unique}} :=  \left\{ (\sigma, L(\sigma))\ \mid\ \sigma\in \cup_{i=1}^k T_i\right\}$ 
has size exactly $\left|  \cup_{i=1}^k T_i \right|$.  
Furthermore, $\mathcal{U}_{\textsf{multiple}} := \left\{ (\sigma, i)\ \mid\ \sigma \in T_i\right\}$ 
has size $\sum_{i=1}^k |T_i|$. 

\

\textbf{Part 1: If $\forall i$,  $|S_i| \geq t$ then $\probunif{\mathsf{Fail}}{\accurateSmpevent{1,\ldots, k}} \leq \frac{\delta}{2}$.} 
If $|S_i| \leq t$ for all $i$,
then the condition in the \textbf{if} clause in \lineref{ev-if} 
is always satisfied. 
For a run of the loop (\linerangeref{ev-loopstart}{ev-loopend}), 
consider the \linerangeref{ev-samplei}{ev-samplesigma}. 
We say the pair $(\sigma, i)$ is sampled in round $j$ 
if in the $j^\text{th}$ iteration of the loop 
$i$ is sampled in \lineref{ev-samplei} and then 
in that same iteration $\sigma$ 
is obtained from $S_i.\mathsf{dequeue()}$ in the \lineref{ev-samplesigma}. 

For any $i_0 \in \set{1, \ldots, k}, \sigma_0\in S_{i_0}$ and $j_0 \in \set{1, \ldots, t}$, 
we have
\begin{equation}
 \probunif{(\sigma_0, i_0) \mbox{ is sampled in round }j_0}{\accurateSmpevent{1,\ldots, k}}  =  
\begin{aligned}
\begin{array}{l}
 \quad \probunif{i_0  \mbox{ is sampled in \lineref{ev-samplei} in round } j_0}{\accurateSmpevent{1,\ldots, k}} \\
\times \prob{\begin{aligned}\sigma_0 \mbox{ is obtained in \lineref{ev-samplesigma} in round} \\j_0 \mid i_0 \mbox{ is sampled in \lineref{ev-samplei}} ; \accurateSmpevent{1,\ldots, k} \end{aligned} } 
 \end{array}
\end{aligned}
\equlabel{appunionone}
\end{equation} 
Since the choice of the set to pick $i_0$ is independent of the distributions of any of the 
$S_i$'s,
we have,
\[\probunif{i_0  \mbox{ is sampled in \lineref{ev-samplei}}}{\accurateSmpevent{1,\ldots, k}} = \Pr[i_0  \mbox{ is sampled in \lineref{ev-samplei}}] = \frac{\sz_{i_0}}{\sum_{i}^k \sz_i}.\]
Next observe that the second term in \equref{appunionone}
is $\probunif{\sigma_0 \mbox{ is sampled in round $j_0$ from } S_{i_0}}{\accurateSmpevent{1,\ldots, k}}$.
This term, in turn, can be computed
by considering the disjoint union of the events corresponding to how many times
$i_0$ has been chosen so far: $\probunif{\sigma_0 \mbox{ is sampled in round $j_0$ from } S_{i_0}}{\accurateSmpevent{1,\ldots, k}}$
\begin{align*}
= &\sum\limits_{m=1}^t \prob{ \begin{aligned} i_0 \text{ was sampled $m-1$ times before round } j_0 \\
\text{ and } \sigma_0 \mbox{ is sampled in round $j_0$ from } S_{i_0} \end{aligned}; \accurateSmpevent{1,\ldots, k}} 
=  \sum\limits_{m=1}^t \probunif{S_{i_0}[m] = \sigma_0}{\accurateSmpevent{1,\ldots, k}} \\
= & \sum\limits_{\omega} \sum\limits_{m=1}^t \prob{\omega[m] = \sigma_0 \mid S_{i_0} = \omega} \cdot \prob{U_{i_0} = \omega} 
=  \sum\limits_{\omega} \prob{U_{i_0} = \omega \text{ and } \omega[m] = \sigma_0} = \frac{1}{|T_{i_0}|}
\end{align*}
Therefore, 
$\frac{1}{(1 + \epsilon_{\sz})\sum_{i=1}^k \sz_i} \leq  \probunif{(\sigma_0, i_0)  \mbox{ is sampled in round }j_0}{\accurateSmpevent{1,\ldots, k}} \leq (1 + \epsilon_{\sz})\frac{1}{\sum_{i=1}^k \sz_i}.$

For any $i_0 \in [t]$ and $\sigma_0\in S_{i_0}$, we have $ \probunif{(\sigma_0, i_0) \mbox{ is sampled }}{\accurateSmpevent{1,\ldots, k}} $
\begin{align*}
 = &  \probunif{i_0  \mbox{ is sampled in \lineref{ev-samplei}} }{\accurateSmpevent{1,\ldots, k}} \times \probunif{\sigma \mbox{ is obtained in \lineref{ev-samplesigma}} \mid i_0 \mbox{ is sampled in \lineref{ev-samplei}} }{\accurateSmpevent{1,\ldots, k}} \\
 = & \frac{\sz_{i_0}}{\sum_{i}^k \sz_i}\times \frac{1}{|T_{i_0}|}
\end{align*} 

Probability that $i_0$ is sampled in \lineref{ev-samplei} is independent of the distributions of the samples.
The last equality is because we have assumed $\forall i$, $|S_{i}| \geq t$, 
so the \textbf{if} condition in \lineref{ev-if} is always satisfied. 
Hence once $i_0$ is sampled in \lineref{ev-samplesigma} probability that 
$\sigma_0$ is obtained is exactly $1/|T_{i_0}|$, since the set $S_{i_0}$ 
contains elements sampled uniformly (with replacement) from $T_{i_0}$. Therefore, 
$\frac{1}{(1 + \epsilon_{\sz})\sum_{i=1}^k \sz_i} \leq  \probunif{\sigma_0, i_0)  \mbox{ is sampled }}{\accurateSmpevent{1,\ldots, k}} \leq (1 + \epsilon_{\sz})\frac{1}{\sum_{i=1}^k \sz_i}.$
 
 Let $Y_j$ be the random variable denoting whether the counter $Y$ 
 is increased in the $j$th iteration of the loop. 
 Note that $Y_j = 1$ if the pair $(\sigma_{0}, i_0)$ sampled in the $j$th 
 iteration is in $\mathcal{U}_{\textsf{unique}}$.   
 This is because in \lineref{ev-check} it is checked if the sampled pair is in $\mathcal{U}_{\textsf{unique}}$. 
 
 So, $\mathbb{E}[Y_j; \accurateSmpevent{1,\ldots, k}] = \sum_{(\sigma, i)\in \mathcal{U}_{\textsf{unique}}}\probunif{(\sigma, i) \mbox{ is sampled in round }j}{\accurateSmpevent{1,\ldots, k}}$; here the `;' in the expectation has an analogous definition to the case when used for probability.
 Thus,  
 $$ \frac{|\mathcal{U}_{\textsf{unique}}|}{(1 + \epsilon_{\sz}) \sum_{i=1}^k \sz_i} 
 \leq 
 \mathbb{E}[Y_j; \accurateSmpevent{1,\ldots, k}] 
 \leq 
 (1 + \epsilon_{\sz})  \frac{|\mathcal{U}_{\textsf{unique}}|}{\sum_{i=1}^k \sz_i}$$
 
 Note that $Y = \sum_{i}^t Y_i$. Thus at the end of the algorithm the expected value of $Y$ is between 
 $\frac{t |\mathcal{U}_{\textsf{unique}}|}{(1 + \epsilon_{\sz}) \sum_{i=1}^k \sz_i}$ and $(1 + \epsilon_{\sz})^{2}  \frac{t |\mathcal{U}_{\textsf{unique}}|}{\sum_{i=1}^k \sz_i}$.
 Now by Chernoff bound we have, 
 $$\probunif{Y \not\in (1 \pm \frac{\epsilon}{2})\mathbb{E}[Y_j; \accurateSmpevent{1,\ldots, k}]}{\accurateSmpevent{1,\ldots, k}} \leq 2exp(-\frac{\epsilon^2\mathbb{E}[Y_j; \accurateSmpevent{1,\ldots, k}]}{12}).$$
 
 We also know that,
  \begin{align*}
  \mathbb{E}[Y; \accurateSmpevent{1,\ldots, k}] \geq & \frac{t |\mathcal{U}_{\textsf{unique}}|}{(1 + \epsilon_{\sz}) \sum_{i=1}^k \sz_i} 
   \geq \frac{t\max_i |T_i|}{(1 + \epsilon_{\sz}) \sum_{i=1}^k \sz_i}
   \geq  \frac{t\max_i \sz_i}{(1 + \epsilon_{\sz})^2\sum_{i=1}^k \sz_i}
   \geq  12\log\left(\frac{4}{\delta}\right)
  \end{align*}
The last inequality is due to the setting of the parameter $t$. Thus,  by Chernoff bound, probability that 
 the output of the algorithm is not between $(1-\frac{\epsilon}{2})\frac{|\mathcal{U}_{\textsf{unique}}|}{(1+\epsilon_{\sz})}$ and $(1 + \frac{\epsilon}{2})(1+\epsilon_{\sz})|\mathcal{U}_{\textsf{unique}}|$ is at most $\delta/2$. 
 Since $(1-\frac{\epsilon}{2}) \geq \frac{1}{1 + \epsilon}$, so we have $\probunif{\mathsf{Fail}}{\accurateSmpevent{1,\ldots, k}} \leq \frac{\delta}{2}$.  \\

\textbf{Part 2: If for all $i$,  $|S_i|\geq \thresh$ then $\probunif{\mbox{ \lineref{ev-break} is ever reached }}{\accurateSmpevent{1,\ldots, k}}\leq \frac{\delta}{2}.$}

We first observe that since the probability of reaching \lineref{ev-break} is independent of the distributions of the sample, we have that $\probunif{\mbox{ \lineref{ev-break} is ever reached }}{\accurateSmpevent{1,\ldots, k}} = \Pr[\mbox{ \lineref{ev-break} is ever reached }]$, and thus we will try to upper bound the latter quantity instead.
Note that, the \textbf{else} clause in \lineref{ev-break} is ever reached, if for some $i$, the number of times $i$ is sampled in the $t$ iterations is more than \thresh. 

Let $\mathsf{Bad}_i$ denote the event that the index $i$ is sampled (in \lineref{ev-samplei}) more that $\thresh$ number of times during the course of the algorithm, that is during the $t$ runs of the loop 
(\linerangeref{ev-loopstart}{ev-loopend}). And let the event $\mathsf{Bad}$ be $\cup_{i=1}^k \mathsf{Bad}_i$. We will now upper bound, $\Pr[\mathsf{Bad}_i]$. 
Let $W_i$ be the random variable counting the number times $i$ is sampled: $\mathbb{E}[W_i] = t\times \frac{\sz_i}{\sum_{i=1}^k \sz_i}$ which is less than $\frac{12(1 + \epsilon_{\sz})^2}{\epsilon^2}\log(4/\delta)$.  
By simple application of Chernoff Bound we see that $$\Pr[W_i \geq (1+ \Delta)\mathbb{E}[W_i]] \leq exp\left(-\frac{\Delta^2 \mathbb{E}[W_i]}{2 + \Delta}\right) \leq exp\left(-\frac{\Delta\mathbb{E}[W_i]}{2}\right),$$
the last inequality is for $\Delta \geq 2$. For our purpose if we put $\Delta = \log(\frac{2k}{\delta})\frac{1}{\mathbb{E}[W_i]}$, since $|S_i| \geq \thresh \geq \frac{12\cdot (1+\epsl_{\sz})^2}{\varepsilon^2}\cdot \log(\frac{4}{\delta})$  + $\log(\frac{2k}{\delta})$, we have 
$\Pr[\mathsf{Bad_i}] \leq \frac{\delta}{2k}$. So by union bound $\Pr[\mathsf{Bad} ] \leq \delta/2$. 
Observe that for large $k$, $\Delta \geq 2$ as desired.
$\Pr[\mathsf{Bad_i}] \leq \frac{\delta}{2k}$.Observe that for large $k$, $\Delta \geq 2$ as desired. 
So by union bound, we have: 
$\probunif{\mathsf{Bad}}{\accurateSmpevent{1,\ldots, k}} = \Pr[\mathsf{Bad} ] \leq \delta/2.$
\end{proof}


\subsection{Correctness of~\algoref{sampling}}
\seclabel{samplingcorrectness}

We prove the three parts of \thmref{correctness-sampling-subroutine}
individually.

\subsection*{Proof of~\thmref{correctness-sampling-subroutine}\itmref{uniform-on-support}}

Consider an execution of~\algoref{sampling} and let 
$w = w_1 w_2 \ldots w_\lvl \in \lang{\stlvl{\stateq}{\lvl}}$ be the string 
constructed right before entering the branch at \lineref{overflow}.
Let $p_{\zz, i}$ be the value of the variable $\mathsf{pr_{\zz}}$
at level $i$ and let $v$ be the value of $\varphi$ before the function returns.
Then we have,
\begin{align*}
v = \frac{\gamma_0}{p_{w_\lvl, \lvl} \cdot p_{w_{\lvl-1}, \lvl-1} \cdots p_{w_1, 1}}
\end{align*}

Let $\sz^\lvl_{\zz}, \sz^\lvl_{\oo}, \sz^{\lvl-1}_{\zz},  \sz^{\lvl-1}_{\oo}, \ldots \sz^1_{\zz}, \sz^1_{\oo}$
be the estimates obtained in the $2\lvl$ calls to $\appdel$ during the run of $\SampleFun()$.
Then,
\begin{align*}
p_{w_i, i} = \frac{\sz^i_{w_i}}{\sz^i_{w_i} + \sz^i_{1-w_i}}
\end{align*}
Thus,
$
v = \frac{\gamma_0}{\prod\limits_{i=1}^\lvl \frac{\sz^i_{w_i}}{\sz^i_{w_i} + \sz^i_{1 - w_i}}}
 = \gamma_0 \cdot \prod\limits_{i=1}^\lvl \frac{\sz^i_{w_i} + \sz^i_{1 - w_i}}{\sz^i_{w_i}}
$

\noindent
Recall that $P^i$ is the argument of the call at level $i$,
and $P^i_{\bb}$ is the $\bb$-predecessor of $P^i$.
Observe that
$\lang{P^i} = \lang{P^i_{\zz}} \uplus \lang{P^i_{\oo}}$,
and thus, $
|\lang{P^i_{\zz}}| + |\lang{P^i_{\oo}}| = |\lang{P^i}|
$

\noindent

Hence, under the event $\accurateEVevent_{i, \zz} \cap \accurateEVevent_{i, \oo}$ we have
\begin{align}
\frac{\sz^i_{w_i} + \sz^i_{1 - w_i}}{\sz^i_{w_i}} 
\leq 
(1+\beta)^{2i} \cdot \frac{|\lang{P^{i}}|}{|\lang{P^{i-1}}|}
\end{align}

\noindent
This gives us,
\begin{align*}
\begin{array}{c}
v \leq 
\gamma_0 \cdot \prod\limits_{i=1}^\lvl (1+\beta)^{2i} \cdot \frac{|\lang{P^{i}}|}{|\lang{P^{i-1}}|} 
= \gamma_0{\cdot}\Bigg(\prod\limits_{i=1}^\lvl (1+\beta)^{2i}\Bigg){\cdot}\Bigg(\prod\limits_{i=1}^\lvl \frac{|\lang{P^{i}}|}{|\lang{P^{i-1}}|}\Bigg)
\end{array}
\end{align*}

\noindent
Now, observe that $|\lang{P^{0}}| = 1$ and $P^{\lvl} = \set{\stlvl{\stateq}{\lvl}}$. 

Thus, under the event
$\big( \bigcap_{\begin{subarray}{l}1 \leq i\leq \lvl\\ \bb \in \set{\zz, \oo} \end{subarray}} \accurateEVevent_{i, \bb} \big) = \accurateEVevent$, we have
\begin{align}
\begin{array}{c}
v \leq
\gamma_0{\cdot}\Bigg(\prod\limits_{i=1}^\lvl (1+\beta)^{2i}\Bigg){\cdot}|\lang{\stlvl{\stateq}{\lvl}}|
=
\gamma_0{\cdot} (1+\beta)^{\lvl(\lvl+1)} {\cdot}|\lang{\stlvl{\stateq}{\lvl}}|
\end{array}
\end{align}

\noindent
Assuming the event $\accurateEstevent_{\lvl, \stateq}$ also holds, we have 
$\frac{|\lang{\stlvl{\stateq}{\lvl}}|}{\approxcnt{\stlvl{\stateq}{\lvl}}} \leq (1+\beta)^{\lvl}$.
Under this assumption, substituting $\gamma_0 = \czero \cdot \frac{1}{\approxcnt{\stlvl{\stateq}{\lvl}}}$,
$\beta \leq \frac{1}{4\cdot n^2}$, and using the inequality $(1+1/4n^2)^n \leq e$, we have,
\begin{align*}
\begin{array}{c}
v \leq \czero{\cdot}(1+\beta)^{\lvl(\lvl+2)} \leq \czero \cdot e \leq 1 \\
\end{array}
\end{align*}

It then follows that $\prob{\failevent_1 \mid \accurateEVevent \cap \accurateEstevent_{\lvl, \stateq}}$ is equal to $0$..
Substituting, $v = \frac{\gamma_0}{p_{w_\lvl, \lvl} \cdot p_{w_{\lvl-1}, \lvl-1} \cdots p_{w_1, 1}} = \frac{\gamma_0}{ \prod_{i=\lvl}^{1} p_{\zz, i} }$, we then have
\begin{align*}
\prob{\rtrnsmp = w \mid \accurateEVevent \cap \accurateEstevent_{\lvl, \stateq}}
= v \cdot \prod_{i=\lvl}^{1} p_{\zz, i}
= \frac{\gamma_0}{ \prod_{i=\lvl}^{1} p_{\zz, i} } \cdot \prod_{i=\lvl}^{1} p_{\zz, i} = \gamma_0
\end{align*}

\subsection*{Proof of~\thmref{correctness-sampling-subroutine}\itmref{failure-given-accurate}}

We will now estimate the (conditional) probability of the event $\failevent$.
First, observe that for each string $w' \notin \lang{\stlvl{\stateq}{\lvl}}$, 
we have $\prob{\rtrnsmp=w'} = 0$ because  any string $w$ that the algorithm outputs is such that
 there is a path from the initial state to $\stlvl{\stateq}{\lvl}$
 labeled with $w$, and 
 (b) the unrolled automata is such that any path from start state to
 some state $\stlvl{\statep}{i}$ corresponds to a string in $\lang{\stlvl{\statep}{i}}$.

We observe that $\overline{\failevent}$ is the event that
$\rtrnsmp$ is a string in $\lang{\stlvl{\stateq}{\lvl}}$, each of which is a disjoint event.
Thus,
\begin{align*}
&\prob{\failevent \mid \accurateEVevent \cap \accurateEstevent_{\lvl, \stateq}} \\
 &= 1{-}\!\!\!\!\!\!\sum\limits_{w \in \lang{\stlvl{\stateq}{\lvl}}}\!\!\!\prob{\rtrnsmp = w \mid \accurateEVevent \cap \accurateEstevent_{\lvl, \stateq}}\\
 &= 1{-}\!\!\!\!\!\!\sum\limits_{w \in \lang{\stlvl{\stateq}{\lvl}}}\!\!\!\gamma_0 =  1 - \gamma_0|\lang{\stlvl{\stateq}{\lvl}}| \leq 1 - \czero \cdot \frac{|\lang{\stlvl{\stateq}{\lvl}}|}{\approxcnt{\stlvl{\stateq}{\lvl}}}  \leq 1 - \czero \cdot \frac{1}{(1+\beta)^{\lvl}} \leq 1 - \frac{2}{3e^2}
\end{align*}

\subsection*{Proof of~\thmref{correctness-sampling-subroutine}\itmref{accurate-given-good}}

We would like to use \thmref{approx-delphic-correct}
to prove \thmref{correctness-sampling-subroutine}\itmref{accurate-given-good}.
Towards this, we first establish that the pre-conditions of~\thmref{approx-delphic-correct} hold.

\begin{enumerate}
    \item For $i \leq \lvl$, if we substitute $\beta' = (1+\beta)^{i-1}-1, \eta'=\eta/{4n}$, we get 
    $\thresh = 24 \cdot \frac{(1+\beta)^{2i-2}}{\beta^2} \cdot \log \left( \frac{16k_{\bb}n}{\eta} \right)$,
    where $k_{\bb}$ is the number of sets in the argument of $\appdel$.
    Observe that, $k_{\bb} \leq m$ and that,
    for each $i \leq \lvl$,
    we have $(1+\beta)^{2i-2} \leq (1+\beta)^{4n^2} \leq e$.
    Thus for each $\stlvl{\stateq}{i}$, we have
    \[
    \thresh \leq \frac{24 \cdot e}{\beta^2} \log \left(\frac{16mn}{\eta} \right) \leq |\sample{\stlvl{\stateq}{i-1}}|.
    \]

    \item Since we are considering the distribution
    obtained when $\sample{\stlvl{\stateq}{i}}$
    is replaced with $\usample{\stlvl{\stateq}{i}}$
    for each $i \leq \lvl-1$
    (observe the $\probunif{}{\accurateSmpevent{\leq \lvl-1}}$ in the statement of 
    \thmref{correctness-sampling-subroutine}\itmref{accurate-given-good}),
    we directly have that each input $\set{S_i}_i$ to every call to
    $\appdel$ has 
   at least $\thresh$ elements (as established)
    elements obtained by repeatedly
    sampling $\set{T_i}_{i}$ uniformly.

    \item The event $\accurateEstevent_{\leq \lvl-1}$ demands that
    for each $i \leq \lvl-1$, we have
    $$\frac{|\lang{\stlvl{\stateq}{i}}|}{(1+\beta)^{i}} \leq \approxcnt{\stlvl{\stateq}{i}}  \leq (1+\beta)^{i} |\lang{\stlvl{\stateq}{i}}|.$$
\end{enumerate}

From \thmref{approx-delphic-correct}, it thus follows that, 
for all $i$, 
\begin{align*}
\probunif{  \accurateEVevent_{i, \bb}  \; \; \Bigg\vert \;\  
\accurateEstevent_{\leq \lvl-1}
}{\accurateSmpevent{\leq \lvl-1}}
\geq  1- \eta'
\end{align*}

Therefore, applying union-bound, we have 
\begin{align*}
\begin{array}{l}
\probunif{ \bigcap_{\begin{subarray}{l}1 \leq i\leq \lvl\\ \bb \in \set{\zz, \oo} \end{subarray}} \accurateEVevent_{i, \bb}  \; \; \Bigg\vert \;\  
\accurateEstevent_{\leq \lvl-1}
}{\accurateSmpevent{\leq \lvl-1}} 
\geq  1- 2n\cdot\eta' = 1-\frac{\eta}{2}
\end{array}
\end{align*}







\subsection{Correctness of~\algoref{main}}
\seclabel{mainalgo}

We prove the main result (\thmref{main}) by induction on the level $\lvl$.
For each level $\lvl$, we first characterize the accuracy of the computed estimates
$\set{\approxcnt{\stlvl{\stateq}{\lvl}}}_{\stateq \in \states}$ 
and the quality of the sampled sets $\set{\sample{\stlvl{\stateq}{\lvl}}}_{\stateq \in \states}$,
assuming that these were computed when the samples
at the lower levels $1, 2, \ldots, \lvl-1$ are perfectly uniform and have size $\thresh$.
After this, we characterize the real computed estimates and samples by arguing
that the distance of the corresponding random variables, from those computed using
perfectly uniform samples, is small.
We first establish some helpful auxiliary results (\lemref{iterative-count-samp} 
and \lemref{sampbound-iterative}) towards our main proof.
As before, 
we will often invoke distribution entanglement of
${\sf S}_{\leq \lvl-1}$  
by $\accurateSmpevent{\leq \lvl-1}$ for each level $\lvl$.

\begin{lemma}
\lemlabel{iterative-count-samp}
For each $\lvl \leq n$ and $\stateq \in \states$, we have
\begin{align*}
    \probunif{ 
    \accurateEstevent_{\stateq, \lvl}
 \;\; \big\vert \;\;     \accurateEstevent_{\leq \lvl-1}       
    }{\accurateSmpevent{\leq \lvl-1}} \geq 1-\eta
\end{align*}
\end{lemma}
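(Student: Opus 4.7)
The plan is to derive \lemref{iterative-count-samp} as a direct application of~\thmref{approx-delphic-correct} to the two calls of $\appdel$ at~\lineref{line6} of \algoref{main} that compute $\sz_{\zz}$ and $\sz_{\oo}$, followed by a union bound over $\bb \in \set{\zz, \oo}$ and a final algebraic step to combine the two estimates into $\approxcnt{\stlvl{\stateq}{\lvl}}$.

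First, I would verify that the preconditions of~\thmref{approx-delphic-correct} hold for each of the two calls, in the idealized probability space $\accurateSmpevent{\leq \lvl-1}$, conditioned on $\accurateEstevent_{\leq \lvl-1}$. The parameter $\epsl_\sz$ is instantiated with $\beta' = (1+\beta)^{\lvl-1}-1$. Under $\accurateEstevent_{\lvl-1}$, every predecessor estimate $\approxcnt{\stlvl{\statep_{\bb,i}}{\lvl-1}}$ lies within a factor of $(1+\beta)^{\lvl-1} = 1+\beta'$ of $|\lang{\stlvl{\statep_{\bb,i}}{\lvl-1}}|$, so the estimate hypothesis is met. In the $\accurateSmpevent{\leq \lvl-1}$ space, each sequence $\sample{\stlvl{\statep_{\bb,i}}{\lvl-1}}$ consists of independent uniform samples from $\lang{\stlvl{\statep_{\bb,i}}{\lvl-1}}$, and by the padding in \algoref{main} its length equals $\NumSamples$. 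It remains to confirm $\NumSamples \geq \thresh$. Plugging $\epsl{=}\beta$, $\dl{=}\eta/2$, $k \leq m$ into $\thresh$, and using $(1+\beta)^{2(\lvl-1)} \leq (1+1/(4n^2))^{2n} \leq e$, one obtains $\thresh \leq \frac{24 e}{\beta^2}\log(\tfrac{8mn}{\eta})$, which, after substituting $\beta = \varepsilon/(4n^2)$ and $\eta = \delta/(2nm)$, is comfortably dominated by the chosen $\NumSamples$.

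Having verified the preconditions, \thmref{approx-delphic-correct} guarantees that for each $\bb \in \set{\zz, \oo}$,
\[
\probunif{\accurateEVevent_{\lvl, \bb} \,\big\vert\, \accurateEstevent_{\leq \lvl-1}}{\accurateSmpevent{\leq \lvl-1}} \geq 1 - \eta/2,
\]
where $\accurateEVevent_{\lvl, \bb}$ is the event that $\sz_\bb \in [|\lang{P^\lvl_\bb}|/((1+\beta)(1+\beta')),\ (1+\beta)(1+\beta')\cdot|\lang{P^\lvl_\bb}|]$. A union bound yields that both events hold simultaneously with probability at least $1-\eta$ in the same conditional space.

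Finally, I would close the argument by noting $(1+\beta)(1+\beta') = (1+\beta)^\lvl$ and $\lang{\stlvl{\stateq}{\lvl}} = \lang{P^\lvl_\zz} \cdot \set{\zz} \uplus \lang{P^\lvl_\oo} \cdot \set{\oo}$ (disjoint because they differ in their last symbol), so $|\lang{\stlvl{\stateq}{\lvl}}| = |\lang{P^\lvl_\zz}| + |\lang{P^\lvl_\oo}|$. When both $\accurateEVevent_{\lvl, \zz}$ and $\accurateEVevent_{\lvl, \oo}$ occur, adding the two inequalities gives
\[
\frac{|\lang{\stlvl{\stateq}{\lvl}}|}{(1+\beta)^\lvl} \leq \sz_\zz + \sz_\oo = \approxcnt{\stlvl{\stateq}{\lvl}} \leq (1+\beta)^\lvl |\lang{\stlvl{\stateq}{\lvl}}|,
\]
which is exactly the event $\accurateEstevent_{\stateq, \lvl}$. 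I expect the only real obstacle to be the bookkeeping around the sample-size threshold: one has to track how the compounding factor $(1+\beta)^{\lvl-1}$ enters $\thresh$ via $\epsl_\sz = \beta'$ and confirm that the polynomial growth is absorbed by the choice $\beta = \varepsilon/(4n^2)$, but this is a routine calculation and the geometric estimate $(1+\beta)^{2n} \leq e$ makes it essentially immediate.
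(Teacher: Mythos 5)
Your proposal follows essentially the same route as the paper's proof: apply \thmref{approx-delphic-correct} to the two calls of $\appdel$ (one per bit), verify the preconditions (estimate accuracy from $\accurateEstevent_{\leq\lvl-1}$, uniform samples in the $\accurateSmpevent{\leq\lvl-1}$ space, and $\NumSamples\geq\thresh$), union-bound over $\bb\in\set{\zz,\oo}$, and combine via the disjoint decomposition $\lang{\stlvl{\stateq}{\lvl}} = \lang{P^\lvl_\zz}\cdot\set{\zz}\uplus\lang{P^\lvl_\oo}\cdot\set{\oo}$ together with $(1+\beta)(1+\beta')=(1+\beta)^\lvl$. You are slightly more explicit than the paper in spelling out the union bound and the final addition of the two estimate intervals, but the structure, the lemma invoked, and the threshold calculation are all the same.
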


\begin{proof}
Let us denote by
$\lang{\pred{\stlvl{\stateq}{\lvl}}{\bb}}$
the set
$\bigcup\limits_{\statep \in \pred{\stlvl{\stateq}{\lvl}}{\bb}} \lang{p}.$
Observe that,
$$
\lang{\stlvl{\stateq}{\lvl}} = 
\lang{\pred{\stlvl{\stateq}{\lvl}}{\zz}} \cdot \set{\zz} \uplus \lang{\pred{\stlvl{\stateq}{\lvl}}{\oo}} \cdot \set{\oo}.
$$
Now, we would like to apply \thmref{approx-delphic-correct} 
to show that $\sz_{\bb}$ (at level $\lvl$) is a {\em good} estimate. 
To this end, we first show that the pre-conditions stated in \thmref{approx-delphic-correct} hold.
This is because, 
(1) under the event
 $\accurateEstevent_{\leq \lvl-1}$,
the estimates
$\set{\approxcnt{\stlvl{\statep}{\lvl-1}}}_{\statep \in \states}$ satisfy desired requirements. 
Next,
(2) 
for each $i \leq \lvl-1$, the samples
 $\set{\sample{\stlvl{\statep}{\lvl-1}}}_{\statep \in \states}$
 also obey the desired requirements.
 This is because $\thresh$
 (in \algoref{approx-delphic}), after substituting $\beta, \eta$, satisfies 
\begin{align*}
    \thresh &= 24   \cdot \frac{(1+\beta)^{2\lvl-2}}{\beta^2} \cdot \log \left(\frac{4k_{\bb}}{\eta'}\right)
    \leq \frac{24 \cdot e}{\beta^2} \log \left(\frac{8m^2n}{\delta}\right) \leq \frac{384 \cdot e \cdot n^4}{\epsl^2} \log \left(\frac{8m^2}{\delta} \right) \leq \NumSamples 
\end{align*}
Therefore, from \thmref{approx-delphic-correct}, we have 
\begin{align*}
    \probunif{
    \frac{|L(\pred{\stateq^{\lvl}}{\bb})|}{(1+\beta)^\lvl} 
     \leq \sz_{\bb}  
     \leq (1+\beta)^\lvl \; |L(\pred{\stateq^{\lvl}}{\bb})| 
	\; \big\vert \;  
    \accurateEstevent_{\leq \lvl-1}
    }
    {\accurateSmpevent{\leq \lvl-1}}
\end{align*}

 is at least $(1-\frac{\eta}{2})$.    Since $\approxcnt{\stlvl{\stateq}{\lvl}} = \sz_{\zz} + \sz_{\oo}$, we have 
\begin{align*}
    \probunif{
     \frac{|\lang{\stlvl{\stateq}{\lvl}}|}{(1+\beta)^\lvl} \leq \approxcnt{\stlvl{\stateq}{\lvl}}  \leq (1+\beta)^\lvl |\lang{\stlvl{\stateq}{\lvl}}|
    \; \big\vert \; \accurateEstevent_{\leq \lvl-1}
    }
    {\accurateSmpevent{\leq \lvl-1}} \geq 1-\eta \qquad \qedhere
\end{align*}
 \end{proof}

In addition to the previously introduced notation $\probunif{}{\accurateSmpevent{\leq i}}$,
we will also use the notation $\TVDistUnif{X, Y}{\accurateSmpevent{\leq i}}$
(resp. $\TVDistUnif{X, Y \, \big\vert \,  \mathcal{F}}{\accurateSmpevent{\leq i}}$)
to denote the total variational distance between the distribution
induced by random variables $X, Y$ (resp. 
distribution induced by $X, Y$ conditioned on the event $\mathcal{F}$) 
when ${\sf S}_{\leq i}$ is distribution entangled to $\accurateSmpevent{\leq i}$.
Formally,
\begin{align*}
	\TVDistUnif{X, Y \mid \mathcal{F}}{\accurateSmpevent{\leq i}}
	 = \sum_{\omega'} &\prob{\accurateSmpevent{\leq i} = \omega'} 
	\left(\sum_{\omega}
	\Big(\Pr[X = \omega \mid \mathcal{F} \cap ({\sf S}_{\leq i} = \omega')] \right. \\
	& - \min \left(\Pr[X = \omega \mid \mathcal{F} \cap ({\sf S}_{\leq i} = \omega')] , 
	\Pr[Y = \omega \mid \mathcal{F} \cap ({\sf S}_{\leq i} = \omega')]\right) \Big) \Big)
\end{align*}
In the above expression,
$\omega$  ranges over the union of domains of $X$ and $Y$ while $\omega'$ ranges 
over the union of domains of ${\sf S}_{\leq i}$ and $\accurateSmpevent{\leq i}$.

\begin{lemma}
\lemlabel{sampbound-iterative}
For each $\lvl \leq n$ and $\stateq \in \states$, we have
\begin{align*}
\TVDistUnif{
\sample{\stlvl{\stateq}{\lvl}}, \usample{\stlvl{\stateq}{\lvl}}) \;\; \big\vert \;\;  \accurateEstevent_{\leq \lvl}}
{\accurateSmpevent{\leq \lvl-1}}
 \leq \eta 
\end{align*}    
\end{lemma}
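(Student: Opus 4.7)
The plan is to couple the sequence $\sample{\stlvl{\stateq}{\lvl}}$ produced by the algorithm with the ideal sequence $\usample{\stlvl{\stateq}{\lvl}}$ by analyzing the $\NumSamplesExtra$ calls to \SampleFun{} performed during the construction of $\sample{\stlvl{\stateq}{\lvl}}$. Define a bad event $\mathcal{B} = \mathcal{B}_1 \cup \mathcal{B}_2$, where $\mathcal{B}_1$ is the event that the internal $\accurateEVevent$ event fails in at least one of those calls, and $\mathcal{B}_2$ is the event that fewer than $\NumSamples$ of the $\NumSamplesExtra$ calls return a non-$\bot$ value (in which case the algorithm pads $\sample{\stlvl{\stateq}{\lvl}}$ with the fixed word $w_{\stlvl{\stateq}{\lvl}}$). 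I will establish (a) $\prob{\mathcal{B}} \leq \eta$ in the relevant conditional space, and (b) on $\overline{\mathcal{B}}$, $\sample{\stlvl{\stateq}{\lvl}}$ has exactly the distribution of $\usample{\stlvl{\stateq}{\lvl}}$; the desired TV-distance bound then follows from the standard coupling inequality.

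To bound $\prob{\mathcal{B}_1}$ in the space $\probunif{\cdot}{\accurateSmpevent{\leq \lvl-1}}$ further conditioned on $\accurateEstevent_{\leq \lvl}$, I apply \thmref{correctness-sampling-subroutine}\itmref{accurate-given-good} to each of the $\NumSamplesExtra$ calls. Since each call is issued with confidence parameter $\eta/(2\NumSamplesExtra)$, $\accurateEVevent$ fails for a single call with probability at most $\eta/(4\NumSamplesExtra)$, and a union bound gives $\prob{\mathcal{B}_1} \leq \eta/4$. For $\mathcal{B}_2$, \thmref{correctness-sampling-subroutine}\itmref{failure-given-accurate} states that conditioned on $\accurateEVevent$ and $\accurateEstevent_{\lvl,\stateq}$, each call returns a non-$\bot$ string with probability at least $2/(3e^2)$. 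Because --- after fixing $\approxcnt{\stlvl{\stateq}{\lvl}}$ and the ideal lower-level samples --- the $\NumSamplesExtra$ calls use mutually independent fresh randomness, the number of non-$\bot$ returns stochastically dominates a $\mathrm{Binomial}(\NumSamplesExtra,2/(3e^2))$ random variable. A multiplicative Chernoff bound, combined with the specific value of $\NumSamplesExtra$ used in \algoref{main}, yields $\prob{\mathcal{B}_2 \mid \overline{\mathcal{B}_1}} \leq \eta/4$, and hence $\prob{\mathcal{B}} \leq \eta$.

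For part (b), I invoke \thmref{correctness-sampling-subroutine}\itmref{uniform-on-support}: conditioned on $\accurateEVevent$ and $\accurateEstevent_{\lvl,\stateq}$, each non-$\bot$ return is uniform on $\lang{\stlvl{\stateq}{\lvl}}$. By the independence of the calls, together with the elementary fact that conditioning an iid sequence of (``$\bot$ or uniform'') trials on which indices are non-$\bot$ preserves the uniform law of the successful values, the first $\NumSamples$ non-$\bot$ outputs --- which populate $\sample{\stlvl{\stateq}{\lvl}}$ on $\overline{\mathcal{B}}$ --- are iid uniform on $\lang{\stlvl{\stateq}{\lvl}}$, matching exactly the law of $\usample{\stlvl{\stateq}{\lvl}}$. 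The coupling inequality then gives the claim.

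The main obstacle I anticipate is carefully tracking the layered conditioning: (i) the substitution $\sample{\stlvl{\statep}{j}}\to\usample{\stlvl{\statep}{j}}$ encoded by $\accurateSmpevent{\leq \lvl-1}$, (ii) the accuracy event $\accurateEstevent_{\leq \lvl}$, which also constrains the already-computed $\approxcnt{\stlvl{\stateq}{\lvl}}$, and (iii) the per-call event $\accurateEVevent$. In particular, one must argue cleanly that after fixing $\approxcnt{\stlvl{\stateq}{\lvl}}$ and the ideal lower-level samples, the $\NumSamplesExtra$ invocations of \SampleFun{} at level $\lvl$ are genuinely independent --- despite their internal \appdel{} calls consuming entries from the shared sampled sets at level $\lvl-1$ --- so that both the union bound and the Chernoff step are directly applicable. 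Working in the $\usample{\cdot}$-substituted probability space, where one may conceptually regenerate fresh uniform samples for each call, is what makes this independence argument go through; once this setup is in place, the remaining calculations are routine.
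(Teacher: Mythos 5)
Your proposal is correct and follows essentially the same route as the paper: your $\mathcal{B}_1$ is exactly the paper's event $\overline{\bigcap_j \accurateEVevent^j}$, your $\mathcal{B}_2$ is exactly the paper's event $\smallset$, and you invoke the three parts of Theorem~\ref{thm:correctness-sampling-subroutine} in the same places (part~3 for $\mathcal{B}_1$ via a union bound over the $\NumSamplesExtra$ calls, part~2 plus Chernoff for $\mathcal{B}_2$, and part~1 to conclude that on $\overline{\mathcal{B}}$ the law of $\sample{\stlvl{\stateq}{\lvl}}$ matches $\usample{\stlvl{\stateq}{\lvl}}$). If anything, you are slightly more explicit than the paper about the independence of the $\NumSamplesExtra$ trials in the $\usample{\cdot}$-substituted space, which the paper leaves implicit when it applies Chernoff.
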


\begin{proof}
Fix $\lvl$ and $\stateq$. 
Let $\failevent^j$ denote the event that the $j^\text{th}$ call to {\SampleFun} (in the loop starting at \lineref{line7}) for $i=j$ returns $\bot$. 
Similarly, let  $\accurateEVevent^j$  denote the event $\accurateEVevent$ corresponding to the call to {\SampleFun} for $i=j$.
Using Iverson bracket notation~\cite{Ive62}, we denote the indicator variable $[\failevent^j]$. Also, we use $\smallset$ to denote the event that 
$\alpha_{\stlvl{\stateq}{\lvl}} > 0$.
Observe that  $\smallset$ occurs if and only if  $\sum\limits_{j=1}^{\NumSamplesExtra}[\failevent^j] \geq \NumSamplesExtra-\NumSamples$. 

Recall, we have, for each $j$,   
$\probunif{ [\failevent^j] \;\; \big\vert \;\;  \accurateEVevent^{j} \cap \accurateEstevent_{\leq \lvl}
}{\accurateSmpevent{\leq \lvl-1}} $
is at most  $(1-\frac{2}{3e^2})$.
Therefore, from Chernoff bound, we have 
\begin{align*}
    \probunif{ \smallset \;\; \big\vert \;\;  \bigcap\limits_{j} \accurateEVevent^j
    \cap \accurateEstevent_{\leq \lvl}    
    }{\accurateSmpevent{\leq \lvl-1}} \leq \frac{\eta}{2}.
\end{align*}
Furthermore, from~\thmref{correctness-sampling-subroutine} (part 3), we have 
\begin{align*}
\probunif{\overline{\bigcap\limits_{j} \accurateEVevent^j} \big\vert \;\; 
\accurateEstevent_{\leq \lvl}      
}{\accurateSmpevent{\leq \lvl-1}} \leq \frac{\eta}{2}.
\end{align*}

Note that given the event $\accurateEstevent_{\leq \lvl}$ and 
under the distributions obtained using $\accurateSmpevent{\leq \lvl-1}$, 
if the event $\smallset$ does not happen and the event $\bigcap_{j}\accurateEVevent^j$ holds then the distribution of $\sample{q^{\lvl}}$ and $\usample{q^{\lvl}}$ is identical. So, 
\begin{align*}
    &\TVDist{S(q^{\lvl}), U(q^{\lvl}) \;\; \big\vert \;\; 
    \accurateEstevent_{\leq \lvl}\; ;\;      \accurateSmpevent{\leq \lvl-1}}\\ 
    & \leq 
    \probunif{\smallset \cup \overline{\bigcap\limits_{j} \accurateEVevent^j} \;\; \big\vert \;\;  \accurateEstevent_{\leq \lvl}}
    {\accurateSmpevent{\leq \lvl-1}}
    \\
   & \leq 
   \probunif{\smallset \;\; \big\vert \bigcap\limits_{j} \accurateEVevent^j \cap \accurateEstevent_{\leq \lvl}}
   {\accurateSmpevent{\leq \lvl-1}} 
   \\  
& \quad \qquad \qquad + \probunif{\overline{\bigcap\limits_{j} \accurateEVevent^j} \;\; \big\vert \;\; 
\accurateEstevent_{\leq \lvl}  
   }{\accurateSmpevent{\leq \lvl-1}} \leq \eta/2 + \eta/2 = \eta \qedhere
\end{align*}
\end{proof}

\vspace{0.1in}
\subsubsection{Proof of \thmref{main}}

Our proof of the \thmref{main} relies on upper bounding
the probability with which the estimate of \algoref{main} is outside the desired range.
For this, we will set up new random variables and relate
its joint distribution with the statsitical distance between 
the samples constructed by the algorithm and the ideal set of uniform samples. 
Recall that  $\set{\sample{\stlvl{\stateq}{\lvl}}}_{\lvl \in [0,n], \stateq \in \states}$ 
represent  the set of random variables corresponding to the 
(multi)sets of  samples  we obtain during run of~\algoref{main}. 
Furthermore,  $\set{\usample{\stlvl{\stateq}{\lvl}}}_{\lvl \in [0,n], \stateq \in \states}$ 
represent  the set of random variables that correspond to  repeatedly 
sampling $\NumSamples$ elements (with replacement) 
uniformly at random from $\lang{\stlvl{\stateq}{\lvl}}$.
It is worth observing that while, 
for all $\stateq, \statep, \lvl, \lvl'$, 
$\sample{\stlvl{\stateq}{\lvl}}$ and $\sample{\stlvl{\statep}{\lvl'}}$ 
are dependent but $\usample{\stlvl{\stateq}{\lvl}}$ and 
$\usample{\stlvl{\stateq}{\lvl'}}$ are independent. 
Also, the definition of distribution $\usample{\stlvl{\stateq}{\lvl}}$ 
is independent of ~\algoref{main} as it is defined solely based 
on the set of $\lang{\sample{\stlvl{\stateq}{\lvl}}}$. 
 Since we are talking about the sets of random variables, 
 there is an implicit assumption that these sets are ordered.

We will now define two sequences of random variables:  
$$X = \set{X^{\stateq}_{\lvl}}_{\lvl \in [0,n], \stateq \in \states} \ \mbox{ and }\ 
Y =  \set{Y^{\stateq}_{\lvl}}_{\lvl \in [0,n], \stateq \in \states},$$
Instead of explicitly specifying them,
we will instead specify properties about their joint distribution.  
We will use the notation
$X_{\leq i}$ (and resp. $Y_{\leq i})$) to denote
the ordered set $\set{X^{\stateq}_{\lvl}}_{\lvl \in [1,i], \stateq \in \states}$ 
(resp. the ordered set $\set{Y^{\stateq}_{\lvl}}_{\lvl \in [1,i], \stateq \in \states}$).  
The joint distribution of $(X,Y)$ has the following properties:
\begin{enumerate}
	\item For every $\stateq \in \states$, 
		$X^{\stateq}_{0} = Y^{\stateq}_{0} = {\sf S}^{\stateq}_{0}$.
	\item For every $\stateq \in \states, \lvl \in [1, n]$,
		$X^{\stateq}_{\lvl} = S(\stateq^{\lvl})$.
	\item \label{cond:joint} 
	For all sets
	$\omega$, sequences of sets $\omega'$, 
	$\lvl \in [1, n]$, and $\stateq \in \states$,  we have
	\begin{align*}
		&\prob{X^{\stateq}_{\lvl} = Y^{\stateq}_{\lvl} = \omega 
			\mid 
			(X_{\leq \lvl-1}= Y_{\leq \lvl-1} = \omega') \cap \accurateEstevent_{\leq \lvl}}  \\ 
		&\qquad \qquad = \min \left(
		 \prob{U(q^{\ell}) = \omega},    \prob{X_{\lvl}^{\stateq} = \omega\mid (X_{\leq \ell-1}=  \omega') \cap \accurateEstevent_{\leq \ell}} \right) \\
		& \prob{X_{\lvl}^{\stateq} = \omega \mid (X_{\leq \lvl-1}= Y_{\leq \lvl-1} = \omega') \cap \accurateEstevent_{\leq \ell}}   \leq  \prob{X_{\lvl}^{\stateq} = \omega\mid (X_{\leq \ell-1}=  \omega') \cap \accurateEstevent_{\leq \ell}}
	\end{align*}
\end{enumerate}


We remark that such a joint distribution always exists.
This is because (\ref{cond:joint}) only restricts the case when 
$X_{\lvl}^{\stateq} = Y_{\lvl}^{\stateq}$,
thereby, providing enough choices for assignments of probability 
values to the cases where $X_{\lvl}^{\stateq} \neq Y_{\lvl}^{\stateq}$ 
such that $Y^{\stateq}_{\lvl}$ has a well-defined probability distribution. 
In \appref{joint-distribution} we give a concrete distribution that realizes
these properties.

\noindent
We will begin with the following simple observation, whose proof is deferred to Appendix.
\begin{restatable}{claim}{xboundclaim}
		\label{claim:xbound}
$\prob{(X_{\leq \lvl-1} =Y_{\leq \lvl-1} = \omega) \cap \accurateEstevent_{\leq \lvl-1}} 
\leq  
\prob{\accurateSmpevent{\leq \lvl-1} = \omega}$ 
\end{restatable}

Next, we relate the random variables defined above to the desired statistical distance:

\begin{claim}
\claimlabel{claim-1-main-proof}
	\begin{align*}
	 \prob{(X_{\lvl}^{\stateq} \neq Y_{\lvl}^{\stateq}) \cap (X_{\leq \lvl-1} = Y_{\leq \lvl-1})   \cap   \accurateEstevent_{\leq \lvl}} 
	& \leq \TVDistUnif{\sample{\stlvl{\stateq}{\lvl}}, \usample{\stlvl{\stateq}{\lvl}} \; \big\vert \;  \accurateEstevent_{\leq \lvl}}{\accurateSmpevent{\leq \lvl-1}}
	\end{align*}
\end{claim}
\begin{proof} $\prob{(X_{\ell}^{q} \neq Y_{\ell}^{q}) \cap (X_{\leq \ell-1} = Y_{\leq \ell-1})   \cap  \accurateEstevent_{\leq \lvl}}$
	
\begin{align*} 
	\label{eq:coupling}
&	= \sum_{\omega,\omega'}	\prob{(X_{\ell}^{q} = \omega) \cap (X_{\leq \ell-1} =Y_{\leq \ell-1} = \omega')   \cap   \accurateEstevent_{\leq \lvl}} \\
	   &\quad - \sum_{\omega,\omega'} \prob{(X_{\ell}^{q} = Y_{\ell}^{q} =  \omega) \cap (X_{\leq \ell-1} =   Y_{\leq \ell-1} = \omega')  \cap  \accurateEstevent_{\leq \lvl}}\\
	&= \sum_{\omega'} \prob{(X_{\leq \ell-1} =Y_{\leq \ell-1} = \omega') \cap \accurateEstevent_{\leq \lvl}} \cdot \\ 
	& \qquad \left(\sum_{\omega}	\prob{(X_{\ell}^{q} = \omega \mid (X_{\leq \ell-1} =Y_{\leq \ell-1} = \omega') \cap   \accurateEstevent_{\leq \lvl}}\right. \\
& \left. \qquad - \sum_{\omega} \prob{(X_{\ell}^{q} = Y_{\ell}^{q} =  \omega \mid (X_{\leq \ell-1} =   Y_{\leq \ell-1} = \omega')  \cap  \accurateEstevent_{\leq \lvl}} \right)\\
&\leq \sum_{\omega'} \prob{U_{\leq \ell-1} = \omega'} \cdot \big(   \sum_{\omega}	\left(\prob{(X_{\ell}^{q} = \omega \mid (X_{\leq \ell-1}  = \omega') \cap   \accurateEstevent_{\leq \lvl}} \right. \\
&  \qquad \qquad - \min \left(  \prob{U(q^{\ell}) = \omega},  \left.  \left. \prob{X_{\ell}^{q} = \omega \mid ( X_{\leq \ell-1} =   \omega')  \cap  \accurateEstevent_{\leq \lvl} }   \right) \right) \right)  \\
& = \TVDistUnif{\sample{\stlvl{\stateq}{\lvl}}, \usample{\stlvl{\stateq}{\lvl}} \; \big\vert \;  \accurateEstevent_{\leq \lvl}}{\accurateSmpevent{\leq \lvl-1}}
\end{align*}

\end{proof}

We now focus on deriving an upper bound on $\accurateEstevent_{n}$ as stated in the following claim, whose proof is  deferred to Appendix.
\begin{restatable}{claim}{accevent}
	\label{claim:fermat}
	
	$\overline{\accurateEstevent_{n}}   \subseteq 
		\bigcup_{\ell=1}^{n-1}(\overline{\accurateEstevent_{\ell}} \cap 
		{\accurateEstevent_{\leq \ell-1}} \cap
		(X_{\leq \ell-1} = Y_{\leq \ell-1}) ) $
	\begin{align*}
		& \qquad \qquad \qquad \qquad  \cup \bigcup_{\ell=1}^{n-1} \left(
		(X_{ \ell} \neq Y_{ \ell}) \cap
		(X_{\leq \ell-1} = Y_{\leq \ell-1}) \cap
		{\accurateEstevent_{\leq \ell}} 
		\right) 
	\end{align*}
\end{restatable}

\noindent
We can now finish the proof of \thmref{main}. 
We first observe that:
\begin{align*}
		& \prob{\overline{\accurateEstevent_{\ell}} \cap 
		{\accurateEstevent_{\leq \ell-1}} \cap
		(X_{\leq \ell-1} = Y_{\leq \ell-1})} \\
	&	\leq 
	  \sum_{\omega,q} \prob	{\accurateEstevent_{\leq \lvl-1} \cap
		(X_{\leq \ell-1} = Y_{\leq \ell-1} = \omega) }\times \\
	& \qquad\qquad\qquad\qquad		\prob{\overline{\accurateEstevent_{q,\ell}} \mid 
		(X_{\leq \ell-1} = Y_{\leq \ell-1}=\omega) \cap  
		{\accurateEstevent_{\leq \ell-1}} } \\
	& \leq \sum_{\omega,q} \prob	{U_{\leq \ell-1} = \omega}\prob{\overline{\accurateEstevent_{q,\ell}} \mid 
		(X_{\leq \ell-1} = Y_{\leq \ell-1} = \omega)
		 \cap  
		{\accurateEstevent_{\leq \ell-1}} } \\
	& =	 \sum_{\omega,q} \prob	{\accurateEstevent_{\leq \lvl-1} = \omega}\prob{\overline{\accurateEstevent_{q,\ell}} \mid 
	({\sf S}_{\leq \lvl-1} = \omega \cap Y_{\leq \lvl-1} = \omega
	\cap  
	{\accurateEstevent_{\leq \lvl-1}} } \\
	& \leq  \sum_{q}	\probunif{
		\overline{\accurateEstevent_{q,\lvl}} \;\; \big\vert \;\;
		\accurateEstevent_{\leq \lvl-1}
	}{
		\accurateSmpevent{\leq \lvl-1}
	}
\end{align*}

\noindent
The final step in the derivation uses the observation that given ${\sf S}_{\leq \lvl-1} $ 
and ${\accurateEstevent_{\leq \lvl-1}}$,  
the probability of the event $\accurateEstevent_{q,\lvl}$ 
is only dependent on the randomness of the algorithm {\appdel}. 
Thus, given ${\sf S}_{\leq \lvl-1} $ and 
${\accurateEstevent_{\leq \lvl-1}}$,  
the event $\accurateEstevent_{q,\lvl}$ is conditionally independent of $Y_{\leq \lvl-1}$. \\

\noindent
Putting together, we have:
\begin{align*}
	\prob{\overline{\accurateEstevent_{n}}} 
	\leq & \sum_{\lvl \in [1,n]; q \in Q} 
	\probunif{
		\overline{\accurateEstevent_{q,\lvl}} \;\; \big\vert \;\;
		\accurateEstevent_{\leq \lvl-1}
	}{
		\accurateSmpevent{\leq \lvl-1}
	} \\
	&\qquad 
	+ \sum\limits_{\lvl \in [1,n-1]; q \in Q} 
		\TVDistUnif{\sample{\stlvl{\stateq}{\lvl}}, \usample{\stlvl{\stateq}{\lvl}} 
		\mid 
		\accurateEstevent_{\leq \lvl}}{
		\accurateSmpevent{\leq \lvl-1}}
\end{align*}

By \lemref{iterative-count-samp} and \lemref{sampbound-iterative} and applying union bound over all $\lvl\in [1, n]$ and all $q \in Q$ we have 

$$\prob{\overline{\accurateEstevent_{n}}} 
\leq n\cdot m \cdot \eta+n\cdot m \cdot \eta \leq \delta .$$

To calculate the time complexity of~\algoref{main} 
we first observe that the \textbf{for} loop goes over $nm$ elements. 
Now for each $\lvl$ and $\stateq$ there is one call to 
$\appdel_{\beta, \eta}$ with the number of sets being $O(m)$ and 
$\NumSamplesExtra$ calls to $\SampleFun{}$. 
So by \thmref{approx-delphic-correct}, 
the time taken by  $\appdel$  
is the time it takes to do $O(m/\beta^2\log(m/\eta))$ number of 
calls to the membership oracle. 
So the number of calls to the membership oracle
is $\widetilde{O}(mn^4/\epsilon^2\log(1/\delta))$, 
ignoring the factors of $\log (n+m)$. 
The method $\SampleFun{}$ on the other hand is a recursive 
function which calls $\appdel_{\beta, \eta}()$ in every recursive step 
and the depth of the recursion is $O(n)$. 
So the time complexity for each of the $\NumSamplesExtra$ calls to 
$\SampleFun{}$ is the time it takes for
$\widetilde{O}(mn^5/\epsilon^2\log(1/\delta))$ calls to the oracle. 
{{\color{black}
Now, we note that the time complexity of the membership calls can be 
amortized as follows.
First, for every string 
$w \in \bigcup\limits_{\stateq, \lvl}\set{\sample{\stlvl{\stateq}{\lvl}}}$,
we pre-compute and store the set of reachable states of $w$ in $\aut$
in time $m^2|w| \in O(m^2n)$ time.
This takes a total of $\widetilde{O}(m^2n \cdot mn^5/\epsilon^2\log(1/\delta)) = \widetilde{O}(m^3n^6/\epsilon^2\log(1/\delta))$ time.
The membership checks (in \algoref{approx-delphic}, \lineref{ev-check})
can now be performed in $O(1)$ time.
So in total the time complexity of the~\algoref{main} is 
$\widetilde{O}(mn\cdot \NumSamplesExtra \cdot  \frac{mn^5}{\epsilon^2}\log(1/\delta) + m^3n^6/\epsilon^4\log(1/\delta))$ which is $\widetilde{O}((m^2n^{10} + m^3n^6)\cdot \frac{1}{\epsilon^4}\cdot\log^2(1/\delta))$.


\section{Conclusions and Future Work}
\seclabel{conclusions}

We consider the approximate counting problem
\#NFA and propose a fully polynomial time randomized
approximation scheme (FPRAS) that significantly improves
the prior FPRAS~\cite{ACJR19}.
Given the wide range of applications of counting and sampling
from the language of an NFA in applications
including databases, program analysis, testing and more broadly
in Computer Science, we envision that further improvements in the
complexity of approximating \#NFA is a worthwhile avenue for future work.

\begin{acks}
We want to express our sincerest gratitude to Weiming Feng for carefully reviewing the proofs and identifying several issues present in earlier versions of the paper. We are grateful  to Yash Pote, Uddalok Sarkar, and the anonymous reviewers for their constructive feedback, which enhanced the presentation and brought attention to relevant prior research.  Umang Mathur was partially supported by a Singapore Ministry of Education (MoE) Academic Research Fund (AcRF) Tier 1 grant. Kuldeep Meel's work was partly supported in part by the National Research Foundation Singapore [NRF-NRFFAI1-2019-0004], Ministry of Education Singapore Tier 2 grant MOE-T2EP20121-0011, and Ministry of Education Singapore Tier 1 Grant [R-252-000-B59-114].
\end{acks}

\bibliographystyle{ACM-Reference-Format}
\bibliography{references}


\begin{thebibliography}{18}


\ifx \showCODEN    \undefined \def \showCODEN     #1{\unskip}     \fi
\ifx \showDOI      \undefined \def \showDOI       #1{#1}\fi
\ifx \showISBNx    \undefined \def \showISBNx     #1{\unskip}     \fi
\ifx \showISBNxiii \undefined \def \showISBNxiii  #1{\unskip}     \fi
\ifx \showISSN     \undefined \def \showISSN      #1{\unskip}     \fi
\ifx \showLCCN     \undefined \def \showLCCN      #1{\unskip}     \fi
\ifx \shownote     \undefined \def \shownote      #1{#1}          \fi
\ifx \showarticletitle \undefined \def \showarticletitle #1{#1}   \fi
\ifx \showURL      \undefined \def \showURL       {\relax}        \fi
\providecommand\bibfield[2]{#2}
\providecommand\bibinfo[2]{#2}
\providecommand\natexlab[1]{#1}
\providecommand\showeprint[2][]{arXiv:#2}

\bibitem[Amarilli et~al\mbox{.}(2024)]%
        {AvBM23}
\bibfield{author}{\bibinfo{person}{Antoine Amarilli}, \bibinfo{person}{Timothy
  van Bremen}, {and} \bibinfo{person}{Kuldeep~S. Meel}.}
  \bibinfo{year}{2024}\natexlab{}.
\newblock \showarticletitle{Conjunctive Queries on Probabilistic Graphs: The
  Limits of Approximability}. In \bibinfo{booktitle}{\emph{27th International
  Conference on Database Theory, {ICDT}}}, Vol.~\bibinfo{volume}{290}.
  \bibinfo{pages}{15:1--15:20}.
\newblock
\urldef\tempurl%
\url{https://doi.org/10.4230/LIPICS.ICDT.2024.15}
\showDOI{\tempurl}


\bibitem[Angles et~al\mbox{.}(2017)]%
        {Angles2017}
\bibfield{author}{\bibinfo{person}{Renzo Angles}, \bibinfo{person}{Marcelo
  Arenas}, \bibinfo{person}{Pablo Barcel\'{o}}, \bibinfo{person}{Aidan Hogan},
  \bibinfo{person}{Juan Reutter}, {and} \bibinfo{person}{Domagoj Vrgo\v{c}}.}
  \bibinfo{year}{2017}\natexlab{}.
\newblock \showarticletitle{Foundations of Modern Query Languages for Graph
  Databases}.
\newblock \bibinfo{journal}{\emph{ACM Comput. Surv.}} \bibinfo{volume}{50},
  \bibinfo{number}{5}, Article \bibinfo{articleno}{68} (\bibinfo{year}{2017}).
\newblock
\urldef\tempurl%
\url{https://doi.org/10.1145/3104031}
\showDOI{\tempurl}


\bibitem[Arenas et~al\mbox{.}(2019)]%
        {ACJR19}
\bibfield{author}{\bibinfo{person}{Marcelo Arenas},
  \bibinfo{person}{Luis~Alberto Croquevielle}, \bibinfo{person}{Rajesh
  Jayaram}, {and} \bibinfo{person}{Cristian Riveros}.}
  \bibinfo{year}{2019}\natexlab{}.
\newblock \showarticletitle{Efficient Logspace Classes for Enumeration,
  Counting, and Uniform Generation}. In \bibinfo{booktitle}{\emph{Proceedings
  of the 38th ACM SIGMOD-SIGACT-SIGAI Symposium on Principles of Database
  Systems (PODS)}}. \bibinfo{pages}{59–73}.
\newblock
\urldef\tempurl%
\url{https://doi.org/10.1145/3294052.3319704}
\showDOI{\tempurl}


\bibitem[Arenas et~al\mbox{.}(2021)]%
        {ACJR21}
\bibfield{author}{\bibinfo{person}{Marcelo Arenas},
  \bibinfo{person}{Luis~Alberto Croquevielle}, \bibinfo{person}{Rajesh
  Jayaram}, {and} \bibinfo{person}{Cristian Riveros}.}
  \bibinfo{year}{2021}\natexlab{}.
\newblock \showarticletitle{\# NFA Admits an FPRAS: Efficient Enumeration,
  Counting, and Uniform Generation for Logspace Classes}.
\newblock \bibinfo{journal}{\emph{Journal of the ACM (JACM)}}
  \bibinfo{volume}{68}, \bibinfo{number}{6} (\bibinfo{year}{2021}),
  \bibinfo{pages}{1--40}.
\newblock


\bibitem[Bang et~al\mbox{.}(2016)]%
        {Bang2016}
\bibfield{author}{\bibinfo{person}{Lucas Bang}, \bibinfo{person}{Abdulbaki
  Aydin}, \bibinfo{person}{Quoc-Sang Phan}, \bibinfo{person}{Corina~S.
  P\u{a}s\u{a}reanu}, {and} \bibinfo{person}{Tevfik Bultan}.}
  \bibinfo{year}{2016}\natexlab{}.
\newblock \showarticletitle{String Analysis for Side Channels with Segmented
  Oracles}. In \bibinfo{booktitle}{\emph{Proceedings of the 2016 24th ACM
  SIGSOFT International Symposium on Foundations of Software Engineering
  (FSE)}}. \bibinfo{pages}{193–204}.
\newblock


\bibitem[Donz{\'{e}} et~al\mbox{.}(2014)]%
        {Donze2014}
\bibfield{author}{\bibinfo{person}{Alexandre Donz{\'{e}}},
  \bibinfo{person}{Rafael Valle}, \bibinfo{person}{Ilge Akkaya},
  \bibinfo{person}{Sophie Libkind}, \bibinfo{person}{Sanjit~A. Seshia}, {and}
  \bibinfo{person}{David Wessel}.} \bibinfo{year}{2014}\natexlab{}.
\newblock \showarticletitle{Machine Improvisation with Formal Specifications}.
  In \bibinfo{booktitle}{\emph{Music Technology meets Philosophy - From Digital
  Echos to Virtual Ethos: Joint Proceedings of the 40th International Computer
  Music Conference, {ICMC}}}.
\newblock
\urldef\tempurl%
\url{https://hdl.handle.net/2027/spo.bbp2372.2014.196}
\showURL{%
\tempurl}


\bibitem[Gao et~al\mbox{.}(2019)]%
        {Gao2019}
\bibfield{author}{\bibinfo{person}{Pengfei Gao}, \bibinfo{person}{Jun Zhang},
  \bibinfo{person}{Fu Song}, {and} \bibinfo{person}{Chao Wang}.}
  \bibinfo{year}{2019}\natexlab{}.
\newblock \showarticletitle{Verifying and Quantifying Side-Channel Resistance
  of Masked Software Implementations}.
\newblock \bibinfo{journal}{\emph{ACM Trans. Softw. Eng. Methodol.}}
  \bibinfo{volume}{28}, \bibinfo{number}{3}, Article \bibinfo{articleno}{16}
  (\bibinfo{date}{jul} \bibinfo{year}{2019}), \bibinfo{numpages}{32}~pages.
\newblock
\showISSN{1049-331X}
\urldef\tempurl%
\url{https://doi.org/10.1145/3330392}
\showDOI{\tempurl}


\bibitem[Gore et~al\mbox{.}(1997)]%
        {GoreJKSM97}
\bibfield{author}{\bibinfo{person}{Vivek Gore}, \bibinfo{person}{Mark Jerrum},
  \bibinfo{person}{Sampath Kannan}, \bibinfo{person}{Z. Sweedyk}, {and}
  \bibinfo{person}{Stephen~R. Mahaney}.} \bibinfo{year}{1997}\natexlab{}.
\newblock \showarticletitle{A Quasi-Polynomial-Time Algorithm for Sampling
  Words from a Context-Free Language}.
\newblock \bibinfo{journal}{\emph{Inf. Comput.}} \bibinfo{volume}{134},
  \bibinfo{number}{1} (\bibinfo{year}{1997}), \bibinfo{pages}{59--74}.
\newblock
\urldef\tempurl%
\url{https://doi.org/10.1006/inco.1997.2621}
\showDOI{\tempurl}


\bibitem[Iverson(1962)]%
        {Ive62}
\bibfield{author}{\bibinfo{person}{Kenneth~E Iverson}.}
  \bibinfo{year}{1962}\natexlab{}.
\newblock \showarticletitle{A programming language}. In
  \bibinfo{booktitle}{\emph{Proceedings of the May 1-3, 1962, spring joint
  computer conference}}. \bibinfo{pages}{345--351}.
\newblock


\bibitem[Jerrum et~al\mbox{.}(1986)]%
        {JerrumVV86}
\bibfield{author}{\bibinfo{person}{Mark Jerrum}, \bibinfo{person}{Leslie~G.
  Valiant}, {and} \bibinfo{person}{Vijay~V. Vazirani}.}
  \bibinfo{year}{1986}\natexlab{}.
\newblock \showarticletitle{Random Generation of Combinatorial Structures from
  a Uniform Distribution}.
\newblock \bibinfo{journal}{\emph{Theor. Comput. Sci.}}  \bibinfo{volume}{43}
  (\bibinfo{year}{1986}), \bibinfo{pages}{169--188}.
\newblock
\urldef\tempurl%
\url{https://doi.org/10.1016/0304-3975(86)90174-X}
\showDOI{\tempurl}


\bibitem[Kannan et~al\mbox{.}(1995)]%
        {KannanSM95}
\bibfield{author}{\bibinfo{person}{Sampath Kannan}, \bibinfo{person}{Z.
  Sweedyk}, {and} \bibinfo{person}{Stephen~R. Mahaney}.}
  \bibinfo{year}{1995}\natexlab{}.
\newblock \showarticletitle{Counting and Random Generation of Strings in
  Regular Languages}. In \bibinfo{booktitle}{\emph{Proceedings of the Sixth
  Annual {ACM-SIAM} Symposium on Discrete Algorithms (SODA)}}.
  \bibinfo{pages}{551--557}.
\newblock


\bibitem[Karp and Luby(1985)]%
        {KarpL85}
\bibfield{author}{\bibinfo{person}{Richard~M. Karp} {and}
  \bibinfo{person}{Michael Luby}.} \bibinfo{year}{1985}\natexlab{}.
\newblock \showarticletitle{Monte-Carlo algorithms for the planar multiterminal
  network reliability problem}.
\newblock \bibinfo{journal}{\emph{J. Complex.}} \bibinfo{volume}{1},
  \bibinfo{number}{1} (\bibinfo{year}{1985}), \bibinfo{pages}{45--64}.
\newblock
\urldef\tempurl%
\url{https://doi.org/10.1016/0885-064X(85)90021-4}
\showDOI{\tempurl}


\bibitem[Legay et~al\mbox{.}(2010)]%
        {SMC2010}
\bibfield{author}{\bibinfo{person}{Axel Legay}, \bibinfo{person}{Beno{\^i}t
  Delahaye}, {and} \bibinfo{person}{Saddek Bensalem}.}
  \bibinfo{year}{2010}\natexlab{}.
\newblock \showarticletitle{Statistical Model Checking: An Overview}. In
  \bibinfo{booktitle}{\emph{Runtime Verification}}. \bibinfo{pages}{122--135}.
\newblock


\bibitem[Ozkan et~al\mbox{.}(2019)]%
        {Ozkan2019}
\bibfield{author}{\bibinfo{person}{Burcu~Kulahcioglu Ozkan},
  \bibinfo{person}{Rupak Majumdar}, {and} \bibinfo{person}{Simin Oraee}.}
  \bibinfo{year}{2019}\natexlab{}.
\newblock \showarticletitle{Trace Aware Random Testing for Distributed
  Systems}.
\newblock \bibinfo{journal}{\emph{Proc. ACM Program. Lang.}}
  \bibinfo{volume}{3}, \bibinfo{number}{OOPSLA}, Article
  \bibinfo{articleno}{180} (\bibinfo{date}{oct} \bibinfo{year}{2019}),
  \bibinfo{numpages}{29}~pages.
\newblock
\urldef\tempurl%
\url{https://doi.org/10.1145/3360606}
\showDOI{\tempurl}


\bibitem[Saha et~al\mbox{.}(2023)]%
        {Saha2023}
\bibfield{author}{\bibinfo{person}{Seemanta Saha}, \bibinfo{person}{Surendra
  Ghentiyala}, \bibinfo{person}{Shihua Lu}, \bibinfo{person}{Lucas Bang}, {and}
  \bibinfo{person}{Tevfik Bultan}.} \bibinfo{year}{2023}\natexlab{}.
\newblock \showarticletitle{Obtaining Information Leakage Bounds via
  Approximate Model Counting}.
\newblock \bibinfo{journal}{\emph{Proc. ACM Program. Lang.}}
  \bibinfo{volume}{7}, \bibinfo{number}{PLDI}, Article \bibinfo{articleno}{167}
  (\bibinfo{date}{jun} \bibinfo{year}{2023}).
\newblock
\urldef\tempurl%
\url{https://doi.org/10.1145/3591281}
\showDOI{\tempurl}


\bibitem[Sutton et~al\mbox{.}(2007)]%
        {sutton2007fuzzing}
\bibfield{author}{\bibinfo{person}{Michael Sutton}, \bibinfo{person}{Adam
  Greene}, {and} \bibinfo{person}{Pedram Amini}.}
  \bibinfo{year}{2007}\natexlab{}.
\newblock \bibinfo{booktitle}{\emph{Fuzzing: Brute Force Vulnerability
  Discovery}}.
\newblock \bibinfo{publisher}{Addison-Wesley Professional}.
\newblock
\showISBNx{0321446119}


\bibitem[van Bremen and Meel(2023)]%
        {vanBremen2023}
\bibfield{author}{\bibinfo{person}{Timothy van Bremen} {and}
  \bibinfo{person}{Kuldeep~S. Meel}.} \bibinfo{year}{2023}\natexlab{}.
\newblock \showarticletitle{Probabilistic Query Evaluation: The Combined FPRAS
  Landscape}. In \bibinfo{booktitle}{\emph{PODS}}. \bibinfo{pages}{339–347}.
\newblock


\bibitem[Álvarez and Jenner(1993)]%
        {ALVAREZ19933}
\bibfield{author}{\bibinfo{person}{Carme Álvarez} {and}
  \bibinfo{person}{Birgit Jenner}.} \bibinfo{year}{1993}\natexlab{}.
\newblock \showarticletitle{A very hard log-space counting class}.
\newblock \bibinfo{journal}{\emph{Theoretical Computer Science}}
  \bibinfo{volume}{107}, \bibinfo{number}{1} (\bibinfo{year}{1993}),
  \bibinfo{pages}{3--30}.
\newblock
\showISSN{0304-3975}
\urldef\tempurl%
\url{https://doi.org/10.1016/0304-3975(93)90252-O}
\showDOI{\tempurl}


\end{thebibliography}

\clearpage
\appendix



\section{Details from \secref{mainalgo}}

\subsection{Joint Distribution on $X$ and $Y$}
\applabel{joint-distribution}

Here, we will outline a concrete joint distribution that satisfies the
properties we describe in \secref{mainalgo}.
We will present the definition inductively, 
first defining the slice $\set{X^{\stateq}_1, Y^{\stateq}_1}_{\stateq \in \states}$, 
and then use the slices $\set{X^{\stateq}_j, Y^{\stateq}_j}_{\stateq \in \states, 0 \leq j \leq i}$
to define the slice $\set{X^{\stateq}_{i+1}, Y^{\stateq}_{i+1}}_{\stateq \in \states}$.
Our inductive definition will also allow us to establish the properties we outlined.
In the following, we use the notation $\Omega_{X, i, \stateq}$ (resp. $\Omega_{Y, i, \stateq}$)
to denote the domain of $X^{\stateq}_i$ (resp. $Y^{\stateq}_i$).
Analogously, we will use the notations $\Omega_{X, i}$, $\Omega_{Y, i}$,
$\Omega_{X, \leq i}$ and $\Omega_{Y, \leq i}$
to denote respectively
$\Pi_{\stateq \in \states} \Omega_{X, i, \stateq}$,
$\Pi_{\stateq \in \states} \Omega_{Y, i, \stateq}$,
$\Pi_{j=1}^i \Omega_{X, j}$, and
$\Pi_{j=1}^i \Omega_{Y, j}$. \\

\begin{description}

\item[Base Case(i=0).] Follows from the definition.
\item[Inductive Case.]
We will define the distribution inductively
by defining the conditional probabilities
$\prob{X^{\stateq}_{i+1} = \omega_{X, i+1, \stateq} \cap  Y^{\stateq}_{i+1} = \omega_{Y, i+1, \stateq} | X_{\leq i} = \omega_{X, \leq i} \cap  Y_{\leq i} = \omega_{Y, \leq i} \cap  \accurateEstevent_{\leq i}}$
and 

$\prob{X^{\stateq}_{i+1} = \omega_{X, i+1, \stateq} \cap  Y^{\stateq}_{i+1} = \omega_{Y, i+1, \stateq} | X_{\leq i} = \omega_{X, \leq i} \cap  Y_{\leq i} = \omega_{Y, \leq i} \cap  \overline{\accurateEstevent_{\leq i}}}$.

For each $A \in \set{\accurateEstevent_{\leq i}, \overline{\accurateEstevent_{\leq i}}}$, we have:
\begin{align*}
\begin{array}{l}
\prob{X^{\stateq}_{i+1} = \omega_{X, i+1, \stateq} \cap  Y^{\stateq}_{i+1} = \omega_{Y, i+1, \stateq} | X_{\leq i} = \omega_{X, \leq i} \cap  Y_{\leq i} = \omega_{Y, \leq i} \cap  A} \\
\quad\quad =
\begin{cases}
	\min\left(
	\begin{array}{l}
		\prob{\usample{\stlvl{\stateq}{i+1}} = \omega_{X, i+1, \stateq}}, \\
		\prob{\sample{\stlvl{\stateq}{i+1}} = \omega_{X, i+1, \stateq} | {\sf S}_{\leq i} = \omega_{X, \leq i} \cap  A}
	\end{array}
	\right)
	& 
	\text{if } \omega_{X, i+1, \stateq} = \omega_{Y, i+1, \stateq} \\\\
	\begin{array}{c}
		\left[
		\begin{array}{l}
			\prob{\sample{\stlvl{\stateq}{i+1}} = \omega_{X, i+1, \stateq} | {\sf S}_{\leq i} = \omega_{X, \leq i} \cap  A} \\
			- \\
			\min\left(
				\begin{array}{l}
					\prob{\usample{\stlvl{\stateq}{i+1}} = \omega_{X, i+1, \stateq}}, \\
					\prob{\sample{\stlvl{\stateq}{i+1}} = \omega_{X, i+1, \stateq} | {\sf S}_{\leq i} = \omega_{X, \leq i} \cap  A}
				\end{array}
				\right)
		\end{array}
		\right] \\
		\hline
		|\Omega_{Y, i+1, \stateq}|-1
	\end{array}
	&
	\text{otherwise}
\end{cases}
\end{array}
\end{align*}

\noindent
	Let us first observe that for each $A\in \set{\accurateEstevent_{\leq i}, \overline{\accurateEstevent_{\leq i}}}$ and $\omega_{X, \leq i}$, we have
	\begin{align*}
	\prob{X_{i+1}^{\stateq} = \omega_{X, i+1, \stateq} | {X}_{\leq i} = \omega_{X, \leq i} \cap  A} \\
	= 
	&
	\prob{\sample{\stlvl{\stateq}{i+1}} = \omega_{X, i+1, \stateq} | {\sf S}_{\leq i} = \omega_{X, \leq i} \cap  A}
	\end{align*}

Also, it is clear that $\sum_{\omega_{X, \leq i+1}, \omega_{Y, \leq i+1}}\prob{X_{\leq i+1} = \omega_{X, \leq i+1} \cap  Y_{\leq i+1} = \omega_{Y, \leq i+1}} = 1$.
\end{description}

It is easy to see that the above joint distribution satisfies both the desired properties.

\subsection{Proof of Claim~\ref{claim:xbound}}
\xboundclaim*
\begin{proof}
	\begin{align*}
		& \prob{(X_{\leq \ell-1} =Y_{\leq \ell-1} = \omega) \cap \accurateEstevent_{\leq \lvl-1}} \\
		= & \prob{(X_{1} =Y_{1} = \omega_1) \cap \accurateEstevent_{1}}\times \\
		& \quad  \prob{(X_{2} =Y_{2} = \omega_2) \cap \accurateEstevent_{2} \mid (X_{1} =Y_{1} = \omega_1) \cap \accurateEstevent_{1}}\times \\
		& \quad \quad \prob{(X_{3} =Y_{3} = \omega_3) \cap \accurateEstevent_{3} \mid (X_{\leq 2} =Y_{\leq 2} = \omega_1\omega_2) \cap \accurateEstevent_{\leq 2}}\times \dots \\
		= & \prod_{k=1}^{\ell} \prob{(X_{k} =Y_{k} = \omega_k) \cap \accurateEstevent_{k} \mid (X_{\leq (k-1)} =Y_{\leq (k-1)} = \omega') \cap \accurateEstevent_{\leq (k-1)}}
	\end{align*}
	Now, since $\Pr[A\cap B \mid C] = \frac{\Pr[A\cap B \cap C]}{\Pr[C]} \leq  \frac{\Pr[A\cap B \cap C]}{\Pr[B\cap C]} = \Pr[A\mid B\cap C]$ so,
	\begin{align*} 
		&  \prob{(X_{k} =Y_{k} = \omega_k) \cap \accurateEstevent_{k} \mid (X_{\leq (k-1)} =Y_{\leq (k-1)} = \omega') \cap \accurateEstevent_{\leq (k-1)}}\\
		\leq &  \prob{(X_{k} =Y_{k} = \omega_k) \mid (X_{\leq (k-1)} =Y_{\leq (k-1)} = \omega') \cap \accurateEstevent_{\leq (k)}}
	\end{align*}
	
	So, 
	\begin{align*} 
		&  \prob{(X_{k} =Y_{k} = \omega_k) \cap \accurateEstevent_{k} \mid (X_{\leq (k-1)} =Y_{\leq (k-1)} = \omega') \cap \accurateEstevent_{\leq (k-1)}}\\
		\leq & \prob{(X_{k} =Y_{k} = \omega_k) \mid (X_{\leq (k-1)} =Y_{\leq (k-1)} = \omega') \cap \accurateEstevent_{\leq (k)}}
	\end{align*}
	
	From the second condition of the joint random variables $X$ and $Y$ we have 
	\begin{align*}
		\prob{(X_{k} =Y_{k} = \omega_k) \mid (X_{\leq (k-1)} =Y_{\leq (k-1)} = \omega') \cap \accurateEstevent_{\leq (k)}} \leq & \prob{U_{\leq k} = \omega}
	\end{align*}
	So we have, 
	\begin{align*}
		& \prob{(X_{\leq \ell-1} =Y_{\leq \ell-1} = \omega) \cap \accurateEstevent_{\leq \lvl-1}} \\
		= & \prod_{k=1}^{\ell} \prob{(X_{k} =Y_{k} = \omega_k) \cap \accurateEstevent_{k} \mid (X_{\leq (k-1)} =Y_{\leq (k-1)} = \omega') \cap \accurateEstevent_{\leq (k-1)}}\\
		\leq & \prod_{k=1}^{\ell} \prob{(X_{k} =Y_{k} = \omega_k) \mid (X_{\leq (k-1)} =Y_{\leq (k-1)} = \omega') \cap \accurateEstevent_{\leq (k)}}\\
		\leq &  \prod_{k=1}^{\ell} \prob{U_{\leq k} = \omega}  =  \prob{U_{\leq \ell-1} = \omega},
	\end{align*}
	the last equality since the random variables $U_k$ are are uniformly generated and hence independent. 
\end{proof}

\subsection{Proof of Claim~\ref{claim:fermat}}

\accevent*
\begin{proof}
	\begin{align*}
		\overline{\accurateEstevent_{n}} \subseteq 
		&(\overline{\accurateEstevent_{n}} \cap (X_{\leq n-1} = Y_{\leq n-1}) ) 
		\cup (X_{\leq n-1} \neq Y_{\leq n-1}) 
	\end{align*}
	Expanding further, we have 
	\begin{align*}
		\overline{\accurateEstevent_{n}}\subseteq 
		&(\overline{\accurateEstevent_{n}} \cap 
		{\accurateEstevent_{\leq n-1}} \cap
		(X_{\leq n-1} = Y_{\leq n-1}) ) \\ 
		&\cup (\overline{\accurateEstevent_{n}} \cap 
		\overline{\accurateEstevent_{\leq n-1}} \cap
		(X_{\leq n-1} = Y_{\leq n-1}) )  \\
		& \cup ((X_{\leq n-1} \neq Y_{\leq n-1}) \cap \overline{\accurateEstevent_{\leq n-1}} )  \cup ((X_{\leq n-1} \neq Y_{\leq n-1}) \cap {\accurateEstevent_{\leq n-1}} ) 
	\end{align*}

	Observing 
	$ (\overline{\accurateEstevent_{n}} \cap 
	\overline{\accurateEstevent_{\leq n-1}} )  \subseteq \overline{\accurateEstevent_{\leq n-1}}$ and then combining second and third terms, we have
	\begin{align*}
		\overline{\accurateEstevent_{n}}
		\subseteq 
		&(\overline{\accurateEstevent_{n}} \cap 
		{\accurateEstevent_{\leq n-1}} \cap
		(X_{\leq n-1} = Y_{\leq n-1}) ) \cup \overline{\accurateEstevent_{\leq n-1}}  \\
		& \cup ((X_{\leq n-1} \neq Y_{\leq n-1}) \cap {\accurateEstevent_{\leq n-1}} ) 
	\end{align*}
	
	Observe that $\overline{\accurateEstevent_{\leq n-1}} = \bigcup_{\ell=1}^{n-1}  \overline{\accurateEstevent_{\ell}}$. Therefore, we can further applying the above recurrence, in particular, for every $\ell$, we will use the following upper bound. 
	\begin{align*}
		\overline{\accurateEstevent_{\ell}} \subseteq 
		&(\overline{\accurateEstevent_{\ell}} \cap (X_{\leq \ell-1} = Y_{\leq \ell-1}) )  \cup (X_{\leq \ell-1} \neq Y_{\leq \ell-1}) 
	\end{align*}
	\begin{align*}
		\mbox{Therefore, } \quad \overline{\accurateEstevent_{n}} \subseteq 
		&\bigcup_{\ell=1}^{n}(\overline{\accurateEstevent_{\ell}} \cap 
		{\accurateEstevent_{\leq \ell-1}} \cap
		(X_{\leq \ell-1} = Y_{\leq \ell-1}) ) \\ 
		&\qquad\qquad \cup \bigcup_{\ell=1}^{n-1} ((X_{\leq \ell-1} \neq Y_{\leq \ell-1}) \cap {\accurateEstevent_{\leq \ell-1}} ) 
	\end{align*}

	Let us focus on two successive terms of the second expression 
	and using the equation $A = A \cup (\overline{A} \cap B)$
	\begin{align*}
		&
		\underbrace{\left((X_{\leq \ell-1} \neq Y_{\leq \ell-1}) \cap {\accurateEstevent_{\leq \ell-1}} \right)}_{A}
		\cup 	
		\underbrace{\left((X_{\leq \ell} \neq Y_{\leq \ell}) \cap {\accurateEstevent_{\leq \ell}}  \right)}_{B} \\
		= & ((X_{\leq \ell-1} \neq Y_{\leq \ell-1}) \cap {\accurateEstevent_{\leq \ell-1}} )  \cup 
		((X_{\leq \ell-1} = Y_{\leq \ell-1}) \cap (X_{\leq \ell} \neq Y_{\leq \ell}) \cap {\accurateEstevent_{\leq \ell}} ) \\ 
		&\quad\quad \cup 
		(\overline{\accurateEstevent_{\leq \ell-1}} \cap (X_{\leq \ell} \neq Y_{\leq \ell}) \cap {\accurateEstevent_{\leq \ell}} ) \\
	\end{align*}
	
	Noting that $((X_{\leq \ell-1} = Y_{\leq \ell-1}) \cap (X_{\leq \ell} \neq Y_{\leq \ell}) = ((X_{\leq \ell-1} = Y_{\leq \ell-1}) \cap (X_{\ell} \neq Y_{\ell})$ and the fact that
	$(\overline{\accurateEstevent_{\leq \ell-1}}  \cap {\accurateEstevent_{\leq \ell}} ) = \emptyset $ we have,
	\begin{align*}
		&((X_{\leq \ell-1} \neq Y_{\leq \ell-1}) \cap {\accurateEstevent_{\leq \ell-1}} \cup 	((X_{\leq \ell} \neq Y_{\leq \ell}) \cap {\accurateEstevent_{\leq \ell}} ) \\= 
		& ((X_{\leq \ell-1} \neq Y_{\leq \ell-1}) \cap {\accurateEstevent_{\leq \ell-1}} )  \cup 
		\left((X_{ \ell} \neq Y_{ \ell}) \cap (X_{\leq \ell-1} = Y_{\leq \ell-1}) \cap  {\accurateEstevent_{\leq \ell}}\right) 
	\end{align*}
	
	Applying the above simplification for all terms, 
	
	\begin{align*}
		& \bigcup_{\ell=1}^{n-1} ((X_{\leq \ell-1} \neq Y_{\leq \ell-1}) \cap {\accurateEstevent_{\leq \ell-1}} ) =  \bigcup_{\ell=1}^{n-1} \left(
		(X_{ \ell} \neq Y_{ \ell}) \cap
		(X_{\leq \ell-1} = Y_{\leq \ell-1}) \cap
		{\accurateEstevent_{\leq \ell}} 
		\right) \\
		&\text{Therefore, }	\overline{\accurateEstevent_{n}}  \subseteq  
		\bigcup_{\ell=1}^{n-1}(\overline{\accurateEstevent_{\ell}} \cap 
		{\accurateEstevent_{\leq \ell-1}} \cap
		(X_{\leq \ell-1} = Y_{\leq \ell-1}) ) \\ 
		& \qquad\qquad\qquad\qquad\qquad\qquad\qquad\quad \cup \bigcup_{\ell=1}^{n-1} \left(
		(X_{ \ell} \neq Y_{ \ell}) \cap
		(X_{\leq \ell-1} = Y_{\leq \ell-1}) \cap
		{\accurateEstevent_{\leq \ell}} 
		\right) 
	\end{align*}
	
\end{proof}

\end{document}